\documentclass{amsart}[a4paper]

\makeatletter
\let\mymakefnmark\@makefnmark
\let\mythefnmark\@thefnmark

\newcommand{\restorefn}{\let\@makefnmark\mymakefnmark
\let\mythfnmakr\@thefnmark}
\makeatother

\usepackage{graphicx}%
\usepackage{multirow}%
\usepackage{amsmath,amssymb,amsfonts}%
\usepackage{amsthm}%
\usepackage{mathrsfs}%
\usepackage[title]{appendix}%
\usepackage{xcolor}%
\usepackage{textcomp}%
\usepackage{manyfoot}%
\usepackage{booktabs}%
\usepackage{listings}%
\usepackage{xspace}
\usepackage{float}
\usepackage[foot]{amsaddr}

\usepackage{cite}

\usepackage{hyperref}

\usepackage{textcomp}
\usepackage{xcolor}
\usepackage{cleveref}
\usepackage{comment}
\usepackage{csquotes}

\allowdisplaybreaks

\def\BibTeX{{\rm B\kern-.05em{\sc i\kern-.025em b}\kern-.08em
    T\kern-.1667em\lower.7ex\hbox{E}\kern-.125emX}}

%%%%%%%

\newcommand{\N}{\mathbb{N}}

\usepackage{mathtools} % For \DeclarePairedDelimiter
\newcommand{\defeq}{\vcentcolon=}
\usepackage{algorithm}
\usepackage{algorithmicx}%
\usepackage[noend]{algpseudocode}

\algnewcommand\algorithmicforeach{\textbf{for each}}
\algdef{S}[FOR]{ForEach}[1]{\algorithmicforeach\ #1\ \algorithmicdo}

\newtheorem{theorem}{Theorem}
\newtheorem{proposition}[theorem]{Proposition}% 
\newtheorem{theorem*}{Theorem}
\newtheorem{lemma}[theorem]{Lemma}

\newcommand{\1}{\mathbf{1}}

\usepackage{adjustbox}
\usepackage{booktabs}
\usepackage[utf8]{inputenc}
\usepackage{csquotes}
\usepackage{enumerate}
\usepackage{multirow}
\usepackage[autolanguage]{numprint}
\usepackage{graphicx}
\usepackage[locale=US]{siunitx}
\usepackage{subfigure}
\usepackage{textcomp}
\usepackage{url}
\usepackage{tikz}
\usepackage{xspace}
\usepackage[colorinlistoftodos,prependcaption]{todonotes}

\definecolor{light-gray}{gray}{0.93}

%%%%% general writing %%%%%
\newcommand{\ie}{i.\,e.,\xspace}
\newcommand{\st}{s.\,t.,\xspace}

\newcommand{\etal}{\textit{et al.}\xspace}

%%%%% parentheses %%%%%

% Stochastics
\newcommand{\EE}[1]{\mathrm{\mathbf{E}}\left[#1\right]}
\newcommand{\PP}[1]{\mathrm{\mathbf{P}}\left(#1\right)}

%%%%% math %%%%%

\newcommand{\Oh}{\ensuremath{\mathcal{O}}}

\newcommand{\argmax}{\operatorname{argmax}}

\newcommand{\gain}[2]{\mathrm{gain}(#1,#2)}

\newcommand{\ceil}[1]{\left\lceil #1 \right\rceil}

\newcommand{\norm}[1]{\left\lVert#1\right\rVert}
\newcommand{\diag}[1]{\operatorname{diag}(#1)}
\newcommand{\diam}[1]{\operatorname{diam}(#1)}

\newcommand{\mat}[1]{\mathbf{#1}}
\newcommand{\myvec}[1]{\mathbf{#1}}
\newcommand{\ment}[3]{\mathbf{#1}[#2,#3]}
\newcommand{\vent}[2]{\myvec{#1}[#2]}
\newcommand{\Lap}{\mat{L}}
\newcommand{\Lpinv}{\mat{L}^\dagger}

\newcommand{\LpinvG}[1]{\mat{L}_{\mathnormal{#1}}^\dagger}

\newcommand{\ndG}[3]{\myvec{b_{#1}}(#2,#3)}
\newcommand{\shdG}[3]{\myvec{b}^2_{#1}(#2,#3)}
\newcommand{\effres}[2]{\myvec{r}(#1,#2)}
\newcommand{\effresG}[3]{\myvec{r}_{#1}(#2,#3)}
\newcommand{\trace}[1]{\operatorname{tr}(#1)}
\newcommand{\uvec}[1]{\mathbf{e}_{#1}}
\newcommand{\onesvec}{\mathbf{1}}

\newcommand{\numtrees}[4]{N_{#1,#4}(#2,#3)}

%%%%% algorithm names %%%%%

%%%%% tool names %%%%%

%%%%% comments %%%%%
 % reconsider!

 % additions for final submission

%%%%% constants %%%%%

\newcommand{\extend}[1]{{#1}}

\newcommand{\egc}{e.\,g.,\xspace}
\newcommand{\iec}{i.\,e.,\xspace}

\pagestyle{plain}

\usepackage{fancyhdr}

\raggedbottom

\renewcommand{\thefootnote}{\fnsymbol{footnote}}

\begin{document}

\title[Greedy Optimization of Resistance-based Graph Robustness with Global and Local Edge Insertions]
{Greedy Optimization of Resistance-based Graph Robustness with Global and Local Edge Insertions\restorefn\footnotemark
}

\author{Maria Predari}
\email{predarim@hu-berlin.de}
\address[Maria Predari, Lukas Berner, Henning Meyerhenke]{Department of Computer Science, Humboldt-Universität zu Berlin, Unter den Linden 6, 10099 Berlin, Germany}

\author{Lukas Berner}
\email{lukas.berner@hu-berlin.de}

\author{Robert Kooij}
\email{r.e.kooij@tudelft.nl}
\address[Robert Kooij]{Faculty of Electrical Engineering, Mathematics and Computer Science, Delft University of Technology, Mekelweg 4, 2628 CD, Delft, Netherlands}
\address[Robert Kooij]{UNIT ICT, Strategy \& Policy, TNO (Netherlands Organisation for Applied Scientific Research), P.O. Box 96800, 2509 JE, The Hague, Netherlands}

\author{Henning Meyerhenke}
\email{meyerhenke@hu-berlin.de}

\begin{abstract}

  The total effective resistance, also called the Kirchhoff index, provides a
  robustness measure for a graph $G$.
  \extend{
    We consider two optimization problems of adding $k$
    new edges to $G$ such that the resulting graph has minimal total effective resistance
    (i.\,e., is most robust) -- one where the new edges can be anywhere in the graph and one
    where the new edges need to be incident to a specified focus node.}
  The total effective resistance and effective resistances between nodes can be
  computed using the pseudoinverse of the graph Laplacian.
  The pseudoinverse may be computed explicitly via pseudoinversion;
  yet, this takes cubic time in practice and quadratic space.
  We instead exploit combinatorial and algebraic connections
  to speed up gain computations in an established generic greedy heuristic.
  Moreover, we leverage existing randomized techniques to boost the performance of
  our approaches by introducing a sub-sampling step.
  Our different graph- and matrix-based approaches are indeed significantly
  faster than the state-of-the-art greedy algorithm,
  while their quality remains reasonably high and is often quite close.
  Our experiments show that we can now process \extend{larger}
  graphs for which the application of
  the state-of-the-art greedy approach was \extend{impractical} before.

  \textbf{Keywords:} graph robustness, optimization problem, effective resistance, Kirchhoff index, Laplacian pseudoinverse
\end{abstract}

\maketitle

\stepcounter{footnote}\footnotetext{\scshape A preliminary version of this paper appeared in the Proc. of \textit{2022 IEEE/ACM International Conference on Advances in Social Networks Analysis and Mining} (ASONAM)~\cite{predari22faster}. We gratefully acknowledge support by German Research Foundation (DFG) project ALMACOM (grant ME 3619/4-1) and by the TU Delft Safety \& Security Institute project ARCIN.}
\setcounter{footnote}{0}

\renewcommand{\thefootnote}{\arabic{footnote}}

\section{Introduction}
\label{sec:intro}

% -------- context, example applications ---------------%
The analysis of network topologies has received considerable attention
in various fields of science and engineering in the last decades~\cite{barabasi,Newman2018networks}.
Its purpose usually is to better understand the functionality,
dynamics, and evolution of a network\footnote{We use the terms
  \emph{network} and \emph{graph} interchangeably in this paper.}
and its components~\cite{barabasi}.
One important property of a network topology concerns its \emph{robustness},
\iec the extent to which a network is capable to withstand failures of one or more of
its components~\cite{freitas2022graph}.
As an example, one may ask whether the network is guaranteed to remain
connected if an edge is deleted, \egc due to failure or an attack.
Network robustness is a critical design issue in many areas,
including telecommunication~\cite{jose}, power grids \cite{yakup}, public transport~\cite{oded}, supply chains~\cite{supplychains} and water distribution~\cite{water}.

% -------- graph robustness improvement --------------%
Often a critical step in infrastructural maintenance is to
improve the robustness of the network by adding a small
number of edges. The challenge here lies in
the selection of a vertex pair, among all the possible ones,
such that the insertion of an edge between the vertices increases the
network's robustness as much as possible.
% -------- problem statement --------------%
Given a graph $G=(V,E)$ and a budget of $k$ links to be added, our algorithmic formalization
of this task asks to find a set $X \subset {V \choose 2} \setminus E$ of size $k$
that optimizes the robustness of $G$. We call this problem $k$-GRIP, short
for \emph{global robustness improvement problem}.
\extend{A related task fixes a focus node $v \in V$ from which $k$ edges can be inserted into $G$
  to other nodes; we call this problem $k$-LRIP, short for \emph{local robustness improvement problem}.}
Clearly, one must also choose a measure to
capture a sensible notion of robustness; there are numerous ones proposed in
the literature~\cite{barabasi,jose}.

% -------- motivate effective resistance --------------%
One established measure for $k$-GRIP, which was shown to be a good robustness indicator
in various scenarios~\cite{ELLENS2011,ghosh,Wang2014ImprovingRO}, is \emph{effective graph resistance}
or \emph{total effective resistance} of a graph.
Effective resistance is a pairwise metric on the vertex set of $G$,
which results from viewing the graph as an electrical network.
It relates to uniform spanning trees~\cite{apgm20}, random walks~\cite{Lovsz96RandomWO},
and several centrality measures~\cite{Mavroforakis15, BrandesF05}.
\extend{In fact, it works similarly as an objective function for $k$-LRIP -- we are just restricted in
  the search space to a particular focus node.}
To compute the total effective resistance, one sums the effective resistance over
all vertex pairs in $G$ (for technical details see Section~\ref{sec:prelim}).
Intuitively, the effective resistance becomes small if there are many short paths
between two vertices. Removing an edge in such a case hardly disrupts
the connectivity, since there are usually alternative paths.
Due to this favorable property, we select total effective resistance in this
paper as the robustness measure for $k$-GRIP \extend{and $k$-LRIP}.

% --------- motivation for this paper ---------------------%
The effective graph resistance-based $k$-GRIP version, \extend{recently shown to be $\mathcal {NP}$ hard~\cite{kooij2023resistanceNP},}
was already considered by Summers \etal~\cite{top15}. It was shown in~\cite{summerscorrection} that $k$-GRIP for the effective graph resistance is not submodular, hence without an approximation guarantee for the greedy algorithm (more details in Section~\ref{sec:related}). Still, even without an approximation guarantee, this greedy algorithm provides
very good empirical results -- for small networks it does so in reasonable time. It should be noted that the example given in~\cite{summerscorrection} which proves that $k$-GRIP for the effective graph resistance is non-submodular, also proves that $k$-LRIP is non-submodular for the effective graph resistance.
\newline
The greedy algorithm performs $k$ iterations, at each step adding the edge with highest marginal gain.
To compute these gains, however, the corresponding effective resistance
values are needed. If one acquires them by an initial (pseudo)inversion of
the graph's Laplacian matrix, this takes $\Oh(n^3)$ time with standard tools
in practice (where $n = |V|$). Overall, this approach leads to a running time
of $\Oh(k n^3)$, which limits the applicability to large networks. % (even for small $k$). 

For other problems where this greedy approach works well, a recent stochastic
greedy algorithm~\cite{Mirzasoleiman2015LazierTL} has been shown to be potentially much faster --
while usually producing solutions of nearly the same quality. It does so
by \emph{sampling} from the set of candidates to find the one with highest gain
(from the sample instead of from the whole set) in each iteration.
Our hypothesis for this paper is that this favorable
speed-quality tradeoff of stochastic greedy holds for our $k$-GRIP as well.
We also assume that other Laplacian approximation techniques
can speed up the required computations. Furthermore, we hope that the techniques
that work well for the $k$-GRIP problem also work well (if adapted properly) for the
related $k$-LRIP problem. Some differences in the speed-quality tradeoff are to be expected.

Building upon the generic stochastic greedy approach~\cite{Mirzasoleiman2015LazierTL},
we first devise several heuristic strategies for $k$-GRIP that leverage both graph- and matrix-related properties (Section~\ref{sec:contrib}).
Our approaches accelerate the greedy algorithm by reducing the candidate set
via careful selection of elements to be evaluated and/or by accelerating the
gain computation.
Our experiments (Section~\ref{sec:expes}) confirm that our approaches speed up
the state-of-the-art greedy algorithm significantly. At the same time, the $k$-GRIP solution quality is
more or less preserved, how well depends on the approach. For instance, testing graphs
with $< 57K$ nodes, we produce results that are on average $2-15$\% away from the
greedy solution, while running $3.3-68\times$ faster than the state of the art (SotA).
Finally, we demonstrate that we can now
process much larger graphs for which the application of
the SotA greedy approach was infeasible before.

\extend{Besides a better update strategy for our heuristic \textsc{ColStoch},
  another extension of this paper compared to its conferece version~\cite{predari22faster} consists of
  the $k$-LRIP part (Section~\ref{sec:lrip}). The corresponding experiments in Section~\ref{sec:expes} show
  that our heuristics (except one) work for this problem similarly well when the graphs are sufficiently large.
  For example, on graphs with more than \numprint{10000} nodes,
  one of our new heuristics is $\approx 10\%$ away from the greedy quality,
  but on average $\approx 2$-$7\times$ faster (depending on $k$ and the graph).
}

\section{Preliminaries}
\label{sec:prelim}
%
% general graph, vector notation
%We type vectors and matrices in bold font. 
We assume that our input consists first of all of a connected,
undirected, and simple graph $G = (V, E)$ with $n$ vertices % (\ie $|V| = n$)
and $m$ edges. %(\ie $|E| = m$ and 
% ($E \subseteq {V \choose 2}$) 
For both $k$-GRIP and $k$-LRIP, we also have an integer $k\in\mathbb{Z}_{>0}$ for the number
of edges to be added to $G$; $k$-LRIP additionally requires the focus node $v \in V$ from which
the additional edges are inserted. Our methods can be easily extended to weighted graphs.
However, for sake of presentation simplicity, we only consider unweighted graphs.

For the remainder we use several well-known matrix representations of graphs.
$\mat{L} = \mat{D} - \mat{A}$ is the $n\times n$ Laplacian matrix of $G$,
where $\mat{D}$ is the diagonal matrix of vertex degrees and $\mat{A}$ the adjacency matrix.
$\mat{L}$ is symmetric, positive semi-definite and has
zero row/column sum $\st$ $\Lap \onesvec = \textbf{0}$ where $\onesvec$ is
the all-ones vector.
The $m\times n$ incidence matrix $\mat{B}$ takes for $e \in E$ and $a \in V$
the values: $\ment{\mat{B}}{e}{a} = 1$ if $a$ is the
destination of $e$, $\ment{\mat{B}}{e}{a} = -1$ if
$a$ is the origin of $e$ and $\ment{\mat{B}}{e}{a} = 0$ otherwise.
For undirected graphs, the direction of each edge is specified arbitrarily.
Moreover, $\mat{L} = \mat{B^T}\mat{B}$.
It is well-known that $\Lap$ is not invertible,
so that its Moore-Penrose
pseudoinverse ($\Lpinv$) is used instead, for which holds:
$\Lap \Lpinv = \Lpinv \Lap = \mat{I} - \frac{1}{n}\cdot \onesvec \onesvec^T$~\cite{Gutman04}.
Since $\Lap$ is symmetric, it has an orthonormal basis of eigenvectors
$\mat{U} = [\myvec{u_1},\ldots, \myvec{u_n}]$. We write the spectral decomposition
as: $\Lap = \sum_{i=2}^{n} \myvec{u_i} \lambda_i \myvec{u_i}^T$, where the eigenvectors
$\myvec{u_2},\ldots, \myvec{u_n}$ correspond to the ordered eigenvalues
$ 0 < \lambda_2 \leq, \ldots, \leq \lambda_n$ (excluding the zero eigenvalue).

For a graph $G$ we use $\mat{L}_{G}$ [$\LpinvG{G}$] to refer to its
Laplacian [Laplacian pseudoinverse].
If there is no subscript in our matrix notation, the associated graph is
inferred by the context.

Let $\Omega_G := {V \choose 2} \setminus E $.
For any $X \subset \Omega_G$, we define $G' \defeq G \cup X = (V, E \cup X)$
as the graph obtained by adding the edges of $X$ into $G$.
Then, $k$-GRIP aims at finding a set $X \subset \Omega_G$ with $|X| = k$
\st $|f(G) - f(G')|$ is as large as possible
for a given robustness function $f(\cdot)$.
Here, we use the effective graph
resistance $\mathcal R(G)$ as robustness function (for which lower values indicate higher robustness), which
is the sum of pairwise effective resistances $\effresG{G}{\cdot}{\cdot}$
between all vertex pairs:
\begin{equation}
  \label{eq:rtot}
  \mathcal R(G) = \sum_{a=1}^{n}\sum_{b=a+1}^{n} \effresG{G}{a}{b} \,.
\end{equation}

\extend{Thus, $k$-GRIP for total effective resistance asks to find the set $X$ of size $k$
  that minimizes the resistance of the graph resulting from inserting the edges of $X$.}
\extend{
  The notion of effective resistance comes from viewing $G$
  as an electrical circuit in which each edge $e$ is a resistor
  with resistance $1/\vent{w}{e}$.
  Following fundamental electrical laws, the effective resistance $\effres{a}{b}$
  between two vertices $a$ and $b$ is the potential difference
  between $a$ and $b$ when a unit current is injected into
  $G$ at $a$ and extracted at $b$.
}

\extend{
  The second problem we address is the related $k$-LRIP problem. It also uses total effective
  resistance $\mathcal{R}(G)$ as the objective function. The main difference is that it restricts
  the search space by limiting the insertion of the $k$ edges to a particular focus node
  $v \in V$ that is part of the input. The set $X$ of edges to insert is selected from
  the vertex pairs $\Omega_v := \{(v,u) ~|~ u \in V, \{v,u\} \notin E\}$.
}

Computing $\effresG{G}{a}{b}$ can be done via $\Lpinv$:
\begin{equation}
  \label{eq:effres}
  \effresG{G}{a}{b} = \ment{\Lpinv}{a}{a} + \ment{\Lpinv}{b}{b} - 2 \ment{\Lpinv}{a}{b} \,.
\end{equation}

Combining Eqs.~(\ref{eq:rtot}) and~(\ref{eq:effres}), one gets
\begin{equation}
  \label{eq:rtot-trace}
  \mathcal R(G) = n\trace{\Lpinv} \,.
\end{equation}

For a potential new edge $\{a,b\}$, we have $G' = G \cup \{a,b\}$ and
$\mat{L}_{G'} = \mat{L}_{G} + (\myvec{e}_a - \myvec{e}_b )(\myvec{e}_a - \myvec{e}_b)^T$,
where $\myvec{e}_a$ is a zero vector except for $\vent{e}{a} = 1$.
The gain in terms of $\mathcal{R}$ by the insertion of $\{a,b\}$ is $\mathcal R(G) - \mathcal R(G')$
and relies on $\LpinvG{G'}$ (Sherman-Morrison formula~\cite{SMFormula}):
\begin{align}
  \LpinvG{G'} & = \LpinvG{G} - \frac{1}{1 + \effresG{G}{a}{b}} \LpinvG{G}(\myvec{e}_a - \myvec{e}_b )(\myvec{e}_a - \myvec{e}_b)^T \LpinvG{G}\,. \label{eq:lpinv_update}
\end{align}
%  Using Eq.~(\ref{eq:lpinv_update})
The gain evaluation $\gain{a}{b} = \mathcal R(G) - \mathcal R(G')$ is then
\begin{equation}
  \label{eq:gain}
  \gain{a}{b} = n \frac{\norm{\ment{\LpinvG{G}}{:}{a} - \ment{\LpinvG{G}}{:}{b}}^2}{1+ \effresG{G}{a}{b} }  \,,\\
\end{equation} where $\ment{\LpinvG{G}}{:}{i}$ is the $i^{th}$ column of $\Lpinv$. We rewrite Eq.~(\ref{eq:gain}) as a function of squared $\ell_2$ norms:
\begin{equation}
  \label{eq:gain-jlt}
  \gain{a}{b} = n \frac{\norm{\Lpinv(\myvec{e}_a - \myvec{e}_b)}^2} {1 + \norm{\mat{B}\Lpinv(\myvec{e}_a - \myvec{e}_b)}^2} = n\frac{\shdG{G}{a}{b}}{1+ \effresG{G}{a}{b}}\,,
\end{equation}
where $\ndG{G}{\cdot}{\cdot}$ is known as the biharmonic distance of $G$~\cite{yi2018biharmonic,Wei2021BiharmonicDO}. Finally, we express these distances via the spectral decomposition of $\Lpinv$ (or $\mat{L}$, respectively):
\begin{equation}
  \label{eq:denominator}
  \begin{split}
    & \effresG{G}{a}{b} = \norm{\mat{B}\Lpinv(\uvec{a} - \uvec{b})}^2 = (\uvec{a} - \uvec{b})^T\Lpinv(\uvec{a} - \uvec{b}) \\
    & = (\uvec{a} - \uvec{b})^T \mat{U} \mat{\Lambda}^{-1} \mat{U}^T(\uvec{a} - \uvec{b}) =  \sum_{i = 2}^{n} \frac{(\vent{u_i}{a} - \vent{u_i}{b})^2}{\lambda_i}  \,,
  \end{split}
\end{equation}

where $\mat{\Lambda}$ is the diagonal matrix of the eigenvalues of $\mat{\Lpinv}$. Similarly:
\begin{equation}
  \label{eq:nominator}
  \begin{split}
    & \shdG{G}{a}{b} = \norm{\Lpinv(\uvec{a} - \uvec{b})}^2  = (\uvec{a} - \uvec{b})^T(\Lpinv)^2(\uvec{a} - \uvec{b}) \\
    & = (\uvec{a} - \uvec{b})^T \mat{U} \mat{\Lambda}^{-2} \mat{U}^T(\uvec{a} - \uvec{b})
    = \sum_{i = 2}^{n} \frac{(\vent{u_i}{a} - \vent{u_i}{b})^2}{\lambda_i^2} \,.
  \end{split}
\end{equation}

\section{Related Work}
\label{sec:related}
Robustness of networks has been an active research area for decades~\cite{ps18,freitas2022graph}. Several authors have proposed the use of specific network metrics to quantify the robustness of a given network, see e.\,g.,~\cite{jose}, \cite{fiedler}, \cite{schneider}, \cite{hale}. \extend{In a recent survey on the topic, Freitas \etal~\cite{freitas2022graph} classify robustness metrics into three types:
  metrics based on structural properties, such as edge connectivity or diameter; metrics based on the spectrum of the adjacency matrix, such as the spectral radius or spectral gap; and metrics based on the spectrum of the Laplacian matrix, for instance the algebraic connectivity and the effective graph resistance.}
Here, the algebraic connectivity,
\ie the second smallest eigenvalue $\lambda_2$ of the graph's Laplacian~\cite{fiedler}, is known to capture the overall connectivity of a graph. This metric is also related to synchronization of networks, including opinion dynamics~\cite{consensus}.

\extend{
  Once the robustness of a network has been established, a natural next step is to determine how robustness can be improved. Schneider \etal~\cite{schneider} view the relative size of the largest connected component as robustness measure
  (after removing a certain fraction of the edges) and rewire the edges for robustness improvement.}
A second approach is to add elements to the network.
\extend{Several researchers investigated $k$-GRIP for specific robustness metrics. For instance, Ref.~\cite{wang2008algebraic} considered $1$-GRIP, with the robustness metric being the algebraic connectivity. They suggest several strategies, based upon topological and spectral properties of the graph, to decide which single link to add to the network in order to increase the algebraic connectivity as much as possible.
  Ref.~\cite[Chapter 8]{ZhidongThesis} also considered algebraic connectivity for $k$-GRIP. Under some light conditions, lower bounds for the quality of the greedy solution were obtained.
  It might be argued that the algebraic connectivity is not a proper robustness metric, because there are examples where adding a link to a graph does not change the algebraic connectivity, see~\cite{natural_connectivity}.
  The $\mathcal{NP}$-hardness of $k$-GRIP for algebraic connectivity was proved in \cite{AC_NP}.}
\extend{Manghiuc \etal~\cite{DBLP:conf/esa/Manghiuc0S20} consider a weighted decision variant of $k$-GRIP w.r.t. $\lambda_2$. They propose an almost-linear time algorithm that augments the graph by $k$ edges such that $\lambda_2$ exceeds a specified threshold. A nice overview of algebraic connectivity for $k$-GRIP is presented in Ref.~\cite{li2018algebraicconnectivity}.}

\extend{Papagelis~\cite{papagelis2015averagepathlength} shows that $k$-GRIP with the average shortest path length as a robustness metric does not satisfy the submodularity constraint, but accurate greedy solutions can be obtained. Van Mieghem \etal~\cite{vanmieghem2011spectralradius} consider a link removal problem with the spectral radius (largest eigenvalue of adjacency matrix) as a robustness metric and prove this problem is $\mathcal{NP}$-hard. Baras and Hovareshti~\cite{baras2009spanningtrees} consider the problem of adding $k$ links to a given network, such that the number of spanning trees in the graph is maximized.}

Effective graph resistance as a robustness measure dates back at least to
Ellens \etal~\cite{ELLENS2011}. It has been known much longer, however, that
effective resistance is proportional to commute times of random
walks~\cite{ghosh}.
Refs.~\cite{Wang2014ImprovingRO} and~\cite{ps18}
investigate heuristics for $1$-GRIP with effective graph resistance (both for
edge insertion \emph{and} removal).
Besides deriving theoretical bounds, Wang \etal~\cite{Wang2014ImprovingRO}
compare spectral strategies for edge selection with much simpler heuristics.
Their experiments confirm that their spectral strategies
(particularly the one based on the highest effective
resistance gain) often yield the largest improvement, indicating
a tradeoff between running time and the robustness gain.

\extend{Pizzuti and Socievole~\cite{ps18,Pizzuti2023} proposed and evaluated several genetic algorithms to find the optimal edge to add, in order to minimize $R_G$. Clemente \textit{et al.}~\cite{clemente2020graphresistance} studied $k$-GRIP for the effective graph resistance and gave lower bounds for $R_G$ upon the addition of $k$ links, under some mild conditions for $k$. For $k = 1$ the lower bound in \cite{clemente2020graphresistance} clearly outperforms the lower bound in \cite{Wang2014ImprovingRO}.}

The state-of-the-art heuristic for $k$-GRIP
is a greedy algorithm presented by Summers \etal~\cite{top15},
called here {\sc StGreedy}. In its generic form, such a greedy
algorithm adds in each of the $k$ iterations
the element (here: edge) with the largest marginal gain
(here: best improvement of the robustness measure).
To this end, {\sc StGreedy} computes the full
pseudoinverse of $\mat{L}$ as a preprocessing step.
Then, the marginal gains of all vertex pairs
are computed via Eq.~(\ref{eq:gain}) in $\Oh(n)$ time per edge.
The edge with best marginal gain is added to the graph, and
the pseudoinverse is updated % for the new graph 
using Eq.~(\ref{eq:lpinv_update}).
The time complexity is $\Oh(k n^3)$, which is due to the evaluation of
the gain function in $k$ rounds on $\Oh(n^2)$ node pairs.
The preprocessing takes $\Oh(n^3)$ time with standard tools.
For monotonic submodular problems, the generic greedy algorithm has
an approximation ratio of
$1-1/e$. Even for non-submodular problems such as $k$-GRIP
(see~\cite{summerscorrection} for a counterexample),
the greedy algorithm still often leads to solutions of
high quality~\cite{DBLP:conf/eucc/SummersK19,DBLP:conf/alenex/AngrimanBDGGM21}.
Stochastic greedy algorithms that improve the time complexity of the standard greedy approach
(in a general setting) were proposed in Refs.~\cite{Mirzasoleiman2015LazierTL, hassidim17a}.
These algorithms use random sampling techniques and reduce the total number of function
evaluations (roughly) by a factor of $k$. They achieve provable approximation guarantees
in cases where the greedy algorithm admits them, too.

\extend{Also, $k$-LRIP has been considered by several authors -- for different objectives.
  Shan~\etal~\cite{shan2018noderesistance} consider the node resistance (as robustness metric
  or rather as a centrality measure), which is the sum of the effective resistance from one source node $v$ to all other nodes.
  They assume that the $k$ links that are to be added are chosen from the set of non-existing links from the focus node $v$; not all possible non-existing links. It is shown by the authors that in this setting, the node resistance is a supermodular set function.
  Ref.~\cite{Bergamini} considers $k$-LRIP with betweenness centrality. In fact, $k$-LRIP has been studied with a variety of other centrality metrics, such as PageRank~\cite{avrachenkov2006effect},~\cite{olsen2014approximability}, closeness centrality~\cite{crescenzi2016greedily} and eccentricity~\cite{demaine2010minimizing},~\cite{perumal2013minimizing}. }

\extend{
  Besides using the stochastic greedy algorithm for both $k$-GRIP and $k$-LRIP,
  we intend to accelerate the optimization
  process by approximation techniques for the effective resistance values.
  While Shan~\etal~\cite{shan2018noderesistance} also employ a greedy algorithm for $k$-LRIP,
  their objective function and acceleration techniques differ from ours.
}

\section{Heuristics for $k$-GRIP}
\label{sec:contrib}
In this section, we propose different techniques to improve the performance
of the greedy algorithm for $k$-GRIP. Our approaches are:
{\sc SimplStoch}, {\sc ColStoch}, {\sc SimplStochJLT}, { \sc ColStochJLT} and {\sc SpecStoch}.
They all make use of existing randomized techniques
and follow the general greedy framework of Algorithm~\ref{alg:stochastic}.
Functions named as \textsc{Obj*} relate to the objective function
while those named as \textsc{Candidate*} relate to the set of possible candidate
elements. Functions not defined explicitly in the pseudocode are described in detail in the text.
\extend{The time and space complexities of all approaches (assuming standard tools)} are shown in Table~\ref{tbl:complexities}.

\begin{algorithm}[H]
  \begin{algorithmic}[1]
    \begin{small}
      \Function{GreedyFramework}{$G$, $k$, $\delta$}
      \State \textbf{Input:} Graph $G=(V,E)$, $k \in \N_{>0}$, accuracy $ 0<\delta <1$
      \State \textbf{Output:} $G_k$ -- graph after $k$ edge insertions

      \State $G_0 \gets  G$
      \State  \textsc{ComputeObj}($G_0$, $\dots$) \Comment {\small compute step} \label{line:compute}
      \State $s \gets$ \textsc{CandidateSize}($m$, $n$, $k$, $\delta$)    \label{line:candidate-size}
      %    \State $s \gets \frac {n^2-m} k \cdot \log{\frac 1 {\delta}}$
      %
      \For {$r \gets 0, \dots, k-1$}    \label{line:main-loop}  \Comment {\small main loop}
      \State $\mathcal S \gets$ \textsc{Candidates}($s$, $G_r$, $\ldots$) \label{line:candidate}
      \ForEach {$\{a,b\} \in \mathcal S \times \mathcal S$} \Comment {\small \# of evaluations} \label{line:candidate-iteration}
      \State $\gain{a}{b} \gets$ \textsc{Eval}($a$, $b$, $\dots$) \Comment {\small single evaluation} \label{line:single-eval}
      \EndFor
      \State $(a^*,b^*) \gets \argmax_{a\in S \times b\in S} \gain{a}{b}$
      \State $G_{r+1} = G_r \cup (a^*,b^*) $ \label{line:graph-update}
      \State  \textsc{Update}($G_{r+1}$, $\dots$)  \Comment {update step} \label{line:update}
      \EndFor \label{line:end-main-loop}
      \State \textbf{return} $G_{r+1}$
      \EndFunction
    \end{small}
  \end{algorithmic}
  \caption{General framework for $k$-GRIP}
  \label{alg:stochastic}
\end{algorithm}

For submodular functions the greedy framework can be combined with a lazy
technique~\cite{Minoux78} that boosts the performance of the algorithm.
This process is based on the fact that, even though marginal gains
of elements might change between iterations, their order often stays the same.
An observation important for us is: \enquote{(T)he lazy greedy algorithm can
  be applied to cases with no strict guarantee (for submodularity) since
  experience shows that it most often produces the same final solution
  as the standard greedy algorithm}~\cite{Minoux89}. Based on the above
observation and existing, positive results on the lazy greedy approach
for $k$-GRIP~\cite{top15}, we also employ this technique
\extend{and do so by means of a priority queue.
  Entries in the priority queue are of
  the form $(e, g(e), r)$, where $e\in {V \choose 2} \setminus E$,
  $g(e)$ is the marginal gain of $e$, and $r\in \N_{>0}$ is the round in which the gain was computed.}

All our approaches improve the speed of the greedy algorithm by
reducing the candidate set and/or by accelerating the objective function calculation/update.
Nearly inevitably, the above incurs a smaller or larger trade-off
between speed improvement and solution quality degradation.

\begin{table}[H]
  \caption{Time complexities \extend{(assuming standard (pseudo)inversion tools, linear solvers, and eigensolvers used in practice for Laplacians of general graphs)}
    of all approaches involved. Columns correspond to major steps of Algorithm~\ref{alg:stochastic}.
    In general, the dominant term comes from the total number of evaluations and their time to be evaluated (second column).
    \extend{The $\widetilde{\ensuremath{\mathcal{O}}}$-notation hides $\log(1/\epsilon)$ factors, where $\epsilon$
      is the accuracy threshold of the linear solver. The ${\mathcal{O}}'$-notation
      hides $\log(1/\delta)$ factors, where $\delta$ determines the sample size in the stochastic candidate selection. Note that we consider the Johnson-Lindenstrauss transform (JLT) parameter $\eta$ here as a constant.
      $\tau$ is the number of uniform spanning trees (USTs) required for the diagonal approximation in \textsc{ColStoch}, which depends on the diameter of the graph~\cite{apgm20}. More details in the text.}}
  \centering

  % \scriptsize
  \resizebox{\columnwidth}{!}{%
    \begin{tabular}{lllll} \toprule
                          & Compute                                                & \#Evals $\times$ SingleEval        & \extend{All updates}                                    & \extend{Memory}              \\
      \midrule
      {\sc StGreedy}      & $\Oh(n^3)$                                             & $\Oh(kn^2)$ $\times$ $\Oh(n)$      & $\Oh(kn^2)$                                             & $\Oh(n^2)$                   \\
      {\sc SimplStoch}    & $\Oh(n^3)$                                             & $\Oh'(n^2)$ $\times$ $\Oh(n)$      & $\Oh(kn^2)$                                             & $\Oh(n^2)$                   \\
      {\sc ColStoch}      & $ \widetilde{\ensuremath{\mathcal{O}}}(sm\log n )$     & $\Oh'(n^2)$ $\times$ $\Oh(n)$      & $ \widetilde{\ensuremath{\mathcal{O}}}(ksm\log n )$     & $\Oh((s+\tau)n+m)$           \\
      {\sc SimplStochJLT} & $\widetilde{\ensuremath{\mathcal{O}}}(m\log n)$        & $\Oh'(n^2)$ $\times$ $\Oh(\log n)$ & $\widetilde{\ensuremath{\mathcal{O}}}(km\log n)$        & $\Oh((s+\log n)n+m)$         \\
      {\sc ColStochJLT}   & $\widetilde{\ensuremath{\mathcal{O}}}(m\log n \log s)$ & $\Oh'(n^2)$ $\times$ $\Oh(\log s)$ & $\widetilde{\ensuremath{\mathcal{O}}}(km\log n \log s)$ & $\Oh((\log s + \tau) n + m)$ \\
      {\sc SpecStoch}     & $\Oh(c m)$                                             & $\Oh'(n^2)$ $\times$ $\Oh(c)$      & $\Oh(k c m)$                                            & $\Oh(c n + m)$               \\
      \bottomrule\end{tabular}
  }
  \label{tbl:complexities}
\end{table}

\subsection{{\sc SimplStoch}}
\label{sec_stochastic_greedy}

Our first idea is to simply apply the generic randomized technique
proposed in generic form by Mirzasoleiman \etal~\cite{Mirzasoleiman2015LazierTL}
in the context of $k$-GRIP.
% generic stochastic idea
The main idea of Ref.~\cite{Mirzasoleiman2015LazierTL} is to not inspect all
possible elements for insertion,
but only a reduced sample $\mathcal S$. For non-negative monotone submodular
functions (which does not hold for $k$-GRIP), the stochastic greedy approach provides
an approximation ratio of  $1 - e^{-(1-\delta)}$, where $ 0\leq \delta \leq 1$
is an accuracy parameter.

Regarding {\sc SimplStoch}, any edge \extend{from $\mathcal{S} \times \mathcal{S}$
  is a subset of ${V \choose 2} \setminus E$;}
during each iteration of the main loop we sample uniformly at random
$s \defeq \frac {n(n-1)/2-m}k \log{(\frac 1 {\delta})}$ vertex pairs (Line~\ref{line:candidate} in Algorithm~(\ref{alg:stochastic})),
resulting in $\Oh((n^2-m) \log{(\frac 1{\delta})})$ function evaluations overall.
Those are performed via the Laplacian pseudoinverse obtained during preprocessing, in a similar
way as in {\sc StGreedy}.
More precisely, $\Lpinv$ is computed once before the main loop
(Line~\ref{line:compute}) and is used within the loop to quickly
determine single evaluations (Line~\ref{line:single-eval}). Every time an edge is added to the graph,
$\Lpinv$ is updated accordingly via Eq.~(\ref{eq:lpinv_update})
(Line~\ref{line:update}). The cost of the main loop for {\sc SimplStoch} is reduced compared to greedy
by a factor of $k / \log(1/\delta)$.
Yet, computing $\Lpinv$ is still very time- and space-consuming.

\subsection{{\sc ColStoch}}
Our first improvement upon {\sc SimplStoch} %(in terms of performance)
avoids the full pseudoinversion of $\Lap$, reducing the cost of Line~\ref{line:compute}
in Alg.~\ref{alg:stochastic}.
To this end, we make the following observation:
evaluating a single vertex pair $\{a,b\}$ via Eq.~(\ref{eq:gain})
requires only two columns of $\Lpinv$; precisely those corresponding
to vertices $a$ and $b$. That is why, instead of sampling elements
from ${V \choose 2} \setminus E$, {\sc{ColStoch}} restricts the sampling
process to elements from $V$, the set of columns of $\Lpinv$.
Carefully selecting $\mathcal S$ is critical as it affects the quality
of the solution.
Even if our problem is not submodular, we choose the default sample size of
$s  = n \sqrt {\frac 1 {k} \cdot\log(\frac 1 {\delta})}$
elements (Line~\ref{line:candidate-size}), leading to $\Oh(n^2 \log (\frac 1{\delta}))$
evaluations over all iterations, similar to {\sc SimplStoch}.
The only difference here is that we sample pairs of $\Lpinv$ columns, which is a subset of
${V \choose 2}$ and not ${V \choose 2} \setminus E$. Obviously, we reject
vertex pairs that already exist in the graph as edges.

Moreover, to limit the quality loss, we choose
elements of $\mathcal S$ following graph-based sampling
probabilities (details in Section~\ref{para:strategy}).
These probabilities are initially calculated during
the compute step (Line~\ref{line:compute}) and are updated accordingly
in the update step (Line~\ref{line:update}). Function \textsc{Candidates}()
also receives those sampling probabilities in each iteration (Line~\ref{line:candidate}).
Once $\mathcal S$ is determined, we compute all columns
of $\Lpinv$ corresponding to vertices in $\mathcal S$.
This step is performed once in the main loop after Line~\ref{line:candidate}. For the complexity analysis
we consider it as part of the compute step and for that reason it is not depicted in the loop of the generic Algorithm~\ref{alg:stochastic}.

We compute the columns corresponding to $\mathcal S$ by solving $s$ linear systems.
More precisely,
we solve one linear system for each vertex
$a \in \mathcal S: \mat{L}\myvec{x} = \uvec{a} - \frac{1}{n}\cdot \onesvec$,
where $\onesvec = (1, \dots, 1)^T$ and $\myvec{x} \perp \onesvec$.
\extend{Once the sample set $\mathcal S \subset V$ is determined, }
{\sc ColStoch} performs function evaluations only
between vertex pairs in $ \mathcal S\times \mathcal S$ (Line~\ref{line:single-eval}).
Finally, to further improve the overall running time,
we do not update $\ment{\LpinvG{G}}{:}{\mathcal S}$ for all $a \in \mathcal S$ at
the end of each round (Line~\ref{line:update} of Algorithm~\ref{alg:stochastic}).
Instead, we update individual columns of $\Lpinv$ on demand; only if the corresponding
vertices participate in the candidate set $\mathcal S$ of the following round.

To update previously computed columns, we use the outdated solver solution and apply
the update formula Eq.~(\ref{eq:lpinv_update}) iteratively
for all (in-between) rounds. To do so, we store columns together with the associated
round number.% in which they were determined.

\subsubsection{$\diag{\Lpinv}$ Strategy}
\label{para:strategy}
Let us now explain the sampling probabilities for selecting $\mathcal S$.
Following previous studies~\cite{VMieghem17, Wang2014ImprovingRO},
vertex pairs with maximal effective resistance are good candidates
for largely decreasing the total effective resistance of a graph.
However, the effective resistance metric is not directly applicable in our immediate context.
Firstly, because {\sc ColStoch} requires a vertex-based metric and secondly (and more importantly)
because computing the effective resistance for all vertex pairs $\{a,b\} \in
  {V \choose 2} \setminus E$ would eventually mean to (pseudo)invert $\Lap$ -- with the associated cost.
To circumvent these issues, we sample vertices according to their corresponding
diagonal entries in $\Lpinv$. \extend{Recall from Section~\ref{sec:prelim} that these entries are
  proportionate to the electrical farness of the corresponding nodes.
  In other words,} the diagonal entry $\ment{\Lpinv}{a}{a}$ of a vertex $a$
corresponds to the summed effective resistance between
$a$ and all other vertices: $\sum_{b\in V \setminus \{a\}} \effresG{G}{a}{b}$.
Vertices with maximum $\Lpinv$ diagonal values are
connected badly to all other vertices in the graph (in the electrical sense)~\cite{VMieghem17},
\extend{which is why we select them with higher probability for an edge insertion}.

\extend{
  Computing $\diag{\Lpinv}$ can be performed in almost-linear time by
  using the connection of effective resistance to uniform spanning trees (USTs) of $G$.
  A UST of $G$ is a spanning tree drawn uniformly at random from the set of all spanning trees of $G$.
  Angriman \etal~\cite{apgm20} proposed an algorithm that approximates (effective resistances and)
  $\diag{\Lpinv}$ via UST sampling techniques. The algorithm obtains a
  $\pm \epsilon$-approximation with high probability in $\Oh(m\log^4 n \cdot \epsilon^{-2})$ time for small-world graphs
  (diameter bounded by $\Oh(\log n)$). We provide here some details necessary to understand our new update
  strategy (Section~\ref{sub:update-lpinv}) when an edge is added.
}

\extend{
  Following fundamental electrical laws, the effective resistance $\effres{u}{v}$ of
  vertices $u$ and $v$ is the potential difference between $u$ and $v$ when a unit
  of current is injected into $G$ at $u$ and extracted at $v$.
  According to Ohm's law, whenever there is a potential vector $\myvec{x} \in \mathbb{R}^{n\times 1}$ on the vertices of $G$,
  there is also an electrical flow $\myvec{f} \in \mathbb{R}^{m \times 1}$ on the edges of the graph, equal to the potential differences
  and leading from the node with higher to the node with lower potential value.
  As a consequence, we can express $\effres{u}{v}$ (for any vertex pair $(u,v)$) as the sum of current flows on
  \emph{any} path\footnote{For the algorithm, it is beneficial to use shortest paths, though.} $\langle u = v_0, v_1, \dots, v_{k-1}, v_k = v \rangle$ as:
  \begingroup
  \begin{align}
    \label{eq:flow-path}
    \effres{u}{v} & = \sum_{i=0}^{k-1} \vent{f}{v_i, v_{i+1}}
  \end{align}
  \endgroup
}
\extend{
  Note that the sign of the current flow changes if we traverse an edge against the
  flow direction (and thus the sum may hide negative values when the direction is reversed).
  Eq.~(\ref{eq:flow-path}) can also be written as~\cite{bol98}
  \begingroup
  \begin{align}
    \label{eq:N-values-in-path}
    \effres{u}{v} & = 1/N \sum_{i=0}^{k-1} \left(\numtrees{u}{v_i}{v_{i+1}}{v} - \numtrees{u}{v_{i+1}}{v_i}{v}\right)\,,
  \end{align}
  \endgroup
  where $\numtrees{u}{v_i}{v_{i+1}}{v}$ is the number of spanning trees in which the (unique) path from $u$ to $v$ contains $(v_i, v_{i+1})$ in that order and $N$ is the number of all spanning trees of the graph $G$.
}
The main idea of Ref.~\cite{apgm20} is to compute a sufficiently large sample of
uniform spanning trees (USTs) in order to approximate the effective resistances according to Eq.~(\ref{eq:N-values-in-path}).
The resistance values are then used for approximating the diagonal entries of $\Lpinv$, together with one
column of $\Lpinv$ derived from solving one linear system.

\subsubsection{Updating Approximate $\diag{\Lpinv}$ after Edge Insertions}
\label{sub:update-lpinv}
\extend{For updating $\diag{\Lpinv}$ within $k$-GRIP,} we need to sample USTs for every new graph $G_{r+1}$ (in round $r$).
We do so during the update step of Algorithm~\ref{alg:stochastic} (Line~\ref{line:update}) and save computations by reusing previously computed USTs corresponding to
$G_{r}$. \extend{This dynamic approximation approach can also be useful in other contexts.}
The reused trees are not uniformly distributed in the new graph
$G_{r+1} := G_{r} \cup \{a,b\}$, however, and need to be reweighted accordingly.
Moreover, we still need to sample a number of USTs corresponding
to trees of $G_{r+1}$ that contain the additional edge $\{a,b\}$.
To do so, we use a variant of Wilson's algorithm~\cite{Wilson96}.
The final sample set is the
union of the reweighted USTs (originally from $G_{r}$) and the newly
sampled USTs in $G_{r+1}$. \extend{We provide the details in the following.}

\extend{
  To account for an edge insertion into $G$, let the set of all spanning trees of $G$ (before the edge insertion)
  be denoted as $\mathcal T = \mathcal T_G$.
  When looking at the potential difference between two nodes $u$ and $v$ within one particular spanning tree $T$, then the electrical flow induced on each edge on the unique path from $s$ to $t$ in $T$ is $1/N$.
  Using the principle of superposition for the electrical flow in $G$, we can then write $\effres{u}{v} = \sum_{i=0}^{k-1} \vent{f}{v_i, v_{i+1}}  = \sum_{ T \in \mathcal T} \sum_{i=0}^{k-1} \vent{f^{\mathnormal{(T)}}}{v_i, v_{i+1}}$, where $\vent{f^{\mathnormal{(T)}}}{\cdot}$ restricts the electrical flow
  to edges of the respective spanning tree $T$ (edges not in $T$ contribute $0$ to the sum).
  In the following, we use $\myvec{F_{(u,v)}}(T) := \sum_{i=0}^{k-1} \vent{f^{\mathnormal{(T)}}}{v_i, v_{i+1}}$ as short-hand notation for the sum of the flows.
  Now let $G'$ be the new graph when an edge $e= \{u, v\}$ is added to the graph $G$.
  Let $\tau$ be a random variable from the uniform distribution over spanning trees of
  $G$. Then $\effres{u}{v}  = \EE { \myvec{F_{(u,v)}}(\tau) }$
  and we are interested in computing their updated values upon edge insertions.
}

\extend{
We define $\mathcal T'\defeq \mathcal T_{G'}$. Let $\tau'$ be a uniformly
distributed valued random variable over $\mathcal T'$. We consider
$\myvec{F^{'}_{(u,v)}}\colon \mathcal T'\to \mathbb{R}$ and denote by
$\myvec{F_{(u,v)}} = \myvec{F^{'}_{(u,v)}} |_{\mathcal T}\colon \mathcal T\to \mathbb{R}$
its restriction to spanning trees of $G$.
\begin{lemma}
  \label{lem:update-trees}
  Let $G'$ be the graph resulting from inserting $e = \{u,v\}$ into $G$. Then
  \begin{align}
    \effresG{G'}{u}{v} =  \frac{\effresG{G}{u}{v}}{1+\effresG{G}{u}{v}} \EE{\myvec{F'_{(u,v)}}(\tau') \mid e \in \tau'} + \frac{1}{1+\effresG{G}{u}{v}} \EE{\myvec{F_{(u,v)}}(\tau') \mid \tau' \in \mathcal T}\,.
  \end{align}
\end{lemma}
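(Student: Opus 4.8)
The plan is to condition the uniform distribution over spanning trees of $G'$ on whether the new edge $e=\{u,v\}$ is used, and to combine this with the classical fact that the effective resistance equals the expected flow along a path over a uniformly random spanning tree (the identity $\effresG{G'}{u}{v} = \EE{\myvec{F'_{(u,v)}}(\tau')}$ already recorded in the text). Writing $p \defeq \PP{e \in \tau'}$ for the probability that a uniformly random spanning tree of $G'$ contains $e$, the law of total expectation gives
\begin{align}
  \effresG{G'}{u}{v} = p\,\EE{\myvec{F'_{(u,v)}}(\tau') \mid e \in \tau'} + (1-p)\,\EE{\myvec{F'_{(u,v)}}(\tau') \mid e \notin \tau'}\,.
\end{align}
So the statement reduces to two sub-claims: first, $p = \effresG{G}{u}{v}/(1+\effresG{G}{u}{v})$, equivalently $1-p = 1/(1+\effresG{G}{u}{v})$; and second, conditioned on $e \notin \tau'$, the variable $\tau'$ is uniformly distributed over $\mathcal{T}$ (the spanning trees of $G$), and on that event $\myvec{F'_{(u,v)}}(\tau') = \myvec{F_{(u,v)}}(\tau')$, so that the second term becomes $(1-p)\,\EE{\myvec{F_{(u,v)}}(\tau') \mid \tau' \in \mathcal{T}}$.

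For the first sub-claim I would use the standard spanning-tree / electrical-network identity that the probability a uniformly random spanning tree of a graph $H$ contains a given edge $e$ equals the effective resistance across $e$ in $H$ (this is the classical result that edge-inclusion probabilities are the diagonal entries of the projection matrix $\mat{B}\Lpinv\mat{B}^T$, i.e. $\PP{e \in \tau} = \effres{u}{v}$ for $e=\{u,v\}$). Applying this in $G'$ gives $p = \effresG{G'}{u}{v}$, and then the Sherman--Morrison update of Eq.~(\ref{eq:lpinv_update}) — or directly the scalar consequence $\effresG{G'}{u}{v} = \effresG{G}{u}{v}/(1+\effresG{G}{u}{v})$ that follows from it — yields exactly $p = \effresG{G}{u}{v}/(1+\effresG{G}{u}{v})$ and hence $1-p = 1/(1+\effresG{G}{u}{v})$.

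For the second sub-claim, the key observation is a deletion/contraction-style bijection: a spanning tree of $G'$ that does \emph{not} use $e$ is precisely a spanning tree of $G = G' \setminus e$, and since $G$ and $G'$ have the same vertex set, the number of spanning trees of $G'$ avoiding $e$ is exactly $|\mathcal{T}|$. Because the uniform distribution on $\mathcal{T}'$ restricted to the event $\{e \notin \tau'\}$ assigns equal mass to each such tree, the conditional law of $\tau'$ given $e \notin \tau'$ is the uniform distribution on $\mathcal{T}$; this is exactly the meaning of the conditioning event $\{\tau' \in \mathcal{T}\}$ in the lemma statement. Moreover, for a tree $T \in \mathcal{T}$, the unique $u$--$v$ path in $T$ is the same whether we regard $T$ inside $G$ or inside $G'$, and the summed flow $\myvec{F}$ is defined purely in terms of that path and the total flow value; one must check that the normalization is consistent, i.e. that $\myvec{F'_{(u,v)}}(T)$ restricted to $\mathcal{T}$ coincides with $\myvec{F_{(u,v)}}(T)$ — which is how the notation $\myvec{F_{(u,v)}} = \myvec{F'_{(u,v)}}|_{\mathcal{T}}$ was set up in the paragraph preceding the lemma. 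Substituting the two sub-claims into the total-expectation decomposition gives the claimed formula.

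The main obstacle I expect is getting the normalization and the meaning of $\myvec{F}$ versus $\myvec{F'}$ exactly right: the superposition formula $\effres{u}{v} = \sum_{T} \myvec{F_{(u,v)}}(T)$ from the text already folds a factor of $1/|\mathcal{T}|$ into the per-tree flow (each edge on the path carrying flow $1/N$), so when we pass to $G'$ the per-tree flows are scaled by $1/|\mathcal{T}'|$ rather than $1/|\mathcal{T}|$, and one has to track that this is precisely the factor that turns the raw counts $|\mathcal{T}'|\,p$ and $|\mathcal{T}'|(1-p) = |\mathcal{T}|$ into the stated weights $p$ and $1-p$ in front of the two conditional expectations. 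Being careful that $\EE{\myvec{F'_{(u,v)}}(\tau') \mid \tau' \in \mathcal{T}}$ and $\EE{\myvec{F_{(u,v)}}(\tau') \mid \tau' \in \mathcal{T}}$ really denote the same quantity (the restriction identity) is the only subtle point; everything else is the deletion-of-$e$ bijection plus the Sherman--Morrison scalar update, both of which are standard.
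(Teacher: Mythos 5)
Your proof is correct and takes essentially the same route as the paper's: both condition on whether $e \in \tau'$, use $\PP{e \in \tau'} = \effresG{G'}{u}{v} = \effresG{G}{u}{v}/(1+\effresG{G}{u}{v})$, and identify the conditional law of $\tau'$ given $e \notin \tau'$ with the uniform distribution on $\mathcal{T} = \mathcal{T}' \setminus \mathcal{T}_e$. The only cosmetic difference is that you derive the scalar identity $\effresG{G'}{u}{v} = \effresG{G}{u}{v}/(1+\effresG{G}{u}{v})$ from the Sherman--Morrison update, whereas the paper cites it from the literature; your extra care about the $1/N$ versus $1/N'$ normalization of $\myvec{F_{(u,v)}}$ is a welcome clarification rather than a deviation.
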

\begin{proof}
  Recall from above that $\effresG{G'}{u}{v} = \EE{\myvec{F^{'}_{(u,v)}}(\tau')}$.
  Also note that for any edge $e'=\{u',v'\}$, it holds that its effective resistance equals
  the probability to be part of a UST. Now $\EE{\myvec{F^{'}_{(u,v)}}(\tau')}$ can be computed by distinguishing whether $e$ is contained in $\tau'$ or not:
  \begin{small}
    \begingroup
    \allowdisplaybreaks
    \begin{align}
      \begin{split}
        \EE{\myvec{F^{'}_{(u,v)}}(\tau')}
        & = \PP{ e \in \tau'} \EE{\myvec{F'_{(u,v)}}(\tau') \mid e \in \tau'} + \PP{e \notin \tau'} \EE{\myvec{F^{'}_{(u,v)}}(\tau') \mid e \notin \tau'} \\
        & = \PP{ e \in \tau'} \EE{\myvec{F'_{(u,v)}}(\tau') \mid e \in \tau'} + \PP{e \notin \tau'}\EE{\myvec{F_{(u,v)}}(\tau') \mid \tau' \in \mathcal T} \\
        & = \effresG{G'}{u}{v} \EE{\myvec{F'_{(u,v)}}(\tau') \mid e \in \tau'} + (1-\effresG{G'}{u}{v}) \EE{\myvec{F_{(u,v)}}(\tau') \mid \tau' \in \mathcal T} \\
        & = \frac{\effresG{G}{u}{v}}{1+\effresG{G}{u}{v}} \EE{\myvec{F'_{(u,v)}}(\tau') \mid e \in \tau'} + \frac{1}{1+\effresG{G}{u}{v}} \EE{\myvec{F_{(u,v)}}(\tau') \mid \tau' \in \mathcal T}\,,
        \label{eq:update-trees}
      \end{split}
    \end{align}
    \endgroup
  \end{small}
  using $\PP{ e \in \tau'} = \effresG{G'}{u}{v} = \frac{\effresG{G}{u}{v}}{1+\effresG{G}{u}{v}}$
  (the latter equation follows from Ref.~\cite[Cor.~3]{DBLP:conf/cocoa/RanjanZB14} by setting $u=x=i$ and $v=y=j$)
  and the fact that $\mathcal T$ equals $\mathcal T'\setminus \mathcal T_e$, where $\mathcal T_e$ is the set of trees containing $e$.
\end{proof}
}

\paragraph{Adapting the UST Algorithm}

\extend{
  The second term in Eq.~(\ref{eq:update-trees}) can be approximated using the USTs of $G$, which are already available from previous rounds of the algorithm.
  To approximate the first term, one can sample spanning trees of $G'$ which contain $e$.
  For this we use Algorithm~\ref{alg:wilson}, which is a slight adaptation of Wilson's algorithm
  with a modified starting state. A spanning tree which contains $\{u,v\}$
  can be reinterpreted as a forest with two components by removing $\{u,v\}$.
  Thus, we initialize our version of Wilson’s algorithm with a forest $T$ with two components where each component
  contains only one of $u$ and $v$. Then in each
  iteration we generate a loop-erased random walk from a random vertex until it hits a node
  in $T$.
}

\extend{
  \begin{proposition}
    The distribution of forests $T$ sampled by Algorithm \ref{alg:wilson} is the uniform distribution on the
    set of all spanning trees which contain the edge $\{u, v\}$.
  \end{proposition}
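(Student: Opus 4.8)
The plan is to prove that Algorithm~\ref{alg:wilson} --- a variant of Wilson's algorithm started from the two-component forest $\{u\} \mathbin{\dot\cup} \{v\}$ --- samples spanning trees of $G'$ containing $\{u,v\}$ uniformly, by reducing to the correctness of Wilson's original algorithm. First I would recall the bijection that underlies the statement: contracting the edge $\{u,v\}$ in $G'$ (identifying $u$ and $v$ into a single vertex $w$, keeping multi-edges and discarding the loop that $\{u,v\}$ becomes) yields a multigraph $G''$, and there is a natural bijection between spanning trees of $G'$ that contain $\{u,v\}$ and \emph{all} spanning trees of $G''$: given such a tree $T$ of $G'$, remove $\{u,v\}$ to obtain a forest with two components (one containing $u$, one containing $v$), then identify $u$ with $v$; conversely, any spanning tree of $G''$ pulls back to a forest of $G'$ on the original vertex set to which adding $\{u,v\}$ gives a spanning tree. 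This bijection is weight-preserving in the unweighted case (each tree has weight $1$), so the uniform distribution on spanning trees of $G''$ corresponds exactly to the uniform distribution on spanning trees of $G'$ containing $\{u,v\}$.

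Next I would argue that the stochastic process executed by Algorithm~\ref{alg:wilson} on $G'$ is, step for step, the same process as Wilson's algorithm run on $G''$ with the contracted vertex $w$ as the fixed root. Wilson's algorithm, started from any single root vertex, grows a tree by repeatedly picking a vertex not yet in the tree, running a loop-erased random walk from it until it hits the current tree, and adding the erased path; the output is a uniformly random spanning tree of the graph rooted at that vertex (equivalently, a uniform spanning tree when we forget the root orientation). In Algorithm~\ref{alg:wilson} the initial ``tree'' is the forest $T_0$ with vertex set $\{u,v\}$ and no edges --- which under the contraction is precisely the single-vertex tree $\{w\}$, the root of Wilson's algorithm on $G''$. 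A loop-erased random walk in $G'$ from some vertex $x \notin \{u,v\}$ that is stopped when it first hits $\{u,v\}$ corresponds, under the identification $u \sim v \mapsto w$, to a loop-erased random walk in $G''$ from $x$ stopped when it first hits $\{w\}$: the transition probabilities agree edge-for-edge because $G''$ is obtained from $G'$ by identification of two vertices, and a walk in $G'$ hitting either $u$ or $v$ is exactly a walk in $G''$ hitting $w$. Hence the sequence of loop-erased paths appended is identically distributed in the two processes, and the forest $T$ produced by Algorithm~\ref{alg:wilson} is the pullback under our bijection of the spanning tree of $G''$ produced by Wilson's algorithm rooted at $w$.

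Combining the two steps: Wilson's theorem guarantees the output on $G''$ is uniform over spanning trees of $G''$; pushing this through the weight-preserving bijection of the first step gives that $T \cup \{u,v\}$ is uniform over spanning trees of $G'$ containing $\{u,v\}$, which is the claim (the algorithm returns the forest $T$, and $T \cup \{\{u,v\}\}$ is the corresponding spanning tree; equivalently one may phrase the statement, as the proposition does, directly about the forests). One should also note termination: since $G'$ is connected, $G''$ is connected, so Wilson's algorithm on $G''$ terminates almost surely, hence so does Algorithm~\ref{alg:wilson}.

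The main obstacle --- really the only subtle point --- is making the correspondence between the \emph{started-from-a-forest} Wilson variant on $G'$ and \emph{started-from-a-root} Wilson on the contracted graph $G''$ fully rigorous: one must check that ``hits the two-component forest $T$'' in $G'$ translates correctly to ``hits the growing tree'' in $G''$ at every stage (not just at the first step), including the bookkeeping that edges of $G'$ with one endpoint in $\{u\}$'s component and the other in $\{v\}$'s component do not create spurious cycles once contracted --- they don't, because in $G''$ those are simply edges of the growing tree's complement, handled normally by Wilson's algorithm. A clean way to sidestep repetitive case analysis is to invoke the known fact that Wilson's algorithm is correct when initialized from \emph{any} subtree (or rooted forest) of the graph, treating the initial structure as already-absorbed; I would cite this generalized form of Wilson's algorithm and then the argument reduces to the observation that the initial forest $T_0 = (\{u,v\},\emptyset)$ together with the mandatory edge $\{u,v\}$ is a single subtree of $G'$, so the conditioning ``$e \in \tau'$'' is exactly realized.
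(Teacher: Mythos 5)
Your proof is correct, and it reaches the conclusion by a more self-contained route than the paper. The paper's proof is a three-line citation argument: it invokes the multi-root reformulation of Wilson's algorithm for uniform spanning forests (Avena \emph{et al.}), observes that Algorithm~\ref{alg:wilson} is exactly that procedure with roots $u$ and $v$, and then links the two trees of the resulting uniform two-component spanning forest by $\{u,v\}$. You instead prove the needed uniformity from scratch by contracting $\{u,v\}$ to a single vertex $w$, exhibiting the weight-preserving bijection between spanning trees of $G'$ containing $\{u,v\}$ and spanning trees of the contracted multigraph $G''$, and checking that the loop-erased walks of Algorithm~\ref{alg:wilson} on $G'$ (absorbed at the growing forest) coincide step for step with those of classical single-root Wilson on $G''$ (absorbed at the growing tree rooted at $w$) --- including the multi-edge bookkeeping when a vertex is adjacent to both $u$ and $v$. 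In effect you are supplying the standard proof of the very black box the paper cites, which costs you the careful stage-by-stage correspondence but buys a proof that rests only on Wilson's original theorem; the paper's version is shorter but outsources exactly this correspondence to the reference. Your closing remark about invoking the ``start from any rooted forest'' generalization directly is essentially the paper's actual strategy, so either phrasing would be acceptable.
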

  \begin{proof}
    Avena \etal~\cite{avena2018random} reformulate Wilson's algorithm for uniform spanning forests (USFs) and multiple roots (one for each tree in the forest).
    That is why we set $u$ and $v$ as the roots of two separate trees and let the algorithm compute a USF with two trees. The two trees in the USF are then linked by the edge $\{u,v\}$, resulting in a spanning tree $T'$ of $G'$. By the USF property of the two trees above the claim follows.

  \end{proof}
}

\begin{algorithm}[H]
  \begin{algorithmic}[1]
    \begin{small}
      \Function{Sampling}{$G$, $a$, $b$}
      \State \textbf{Input:} Graph $G=(V,E)$, edge $\{a, b\} \in E$
      \State \textbf{Output:} $T$: UST of $G$ containing $\{a, b\}$
      \State $T_1 \gets$ tree consisting of $a$
      \State $T_2 \gets$ tree consisting of $b$
      \State  Let $x_1, \ldots, x_{n-2}$ be an arbitrary ordering of $V\setminus \{a,b\}$
      \For {$i \gets 1, \dots, n-2$}
      \State $ P \gets$ a random walk from $x_i$ to either $T_1$ or $T_2$ \label{line:candidate}
      \State add the loop erasure of $P$ to the tree hit by $P$
      \EndFor
      \State \textbf{return} $T_1 \cup T_2 \cup \{a,b\}$
      \EndFunction
    \end{small}
  \end{algorithmic}
  \caption{Algorithm for sampling a UST of $G$ containing a fixed edge $\{a,b\}$}
  \label{alg:wilson}
\end{algorithm}

\paragraph{Putting the Pieces Together}

\extend{
  \begin{algorithm}[H]
    \begin{algorithmic}[1]
      \begin{small}
        \Function{ApproxUpdateDiag}{$G_r, r, u, U, t, w, R, B_u$}
        \State \textbf{Input:} Graph $G_r = G \cup \{a, b\}$, current round $r$, pivot node $u$, UST container $U[]$,
        total \# of USTs $t$, round weights $w[]$, effective resistance estimates $R[]$, BFS Tree $B_u$
        \State \textbf{Output:} $\operatorname{diag}(\widetilde{\Lpinv_{G'}})$
        \State $R_{new}[v] \gets 0 ~\forall v \in V\setminus \{u\}$ \label{line:start-init}
        \State $\omega \gets \effresG{G'}{a}{b} = \frac{\effresG{G}{a}{b}}{1 + \effresG{G}{a}{b}}$  \Comment{computed via $\Lpinv_{G'}[:, a]$ and $\Lpinv_{G'}[:, b]$ (linear systems)} \label{line:effres-upd}
        \For{ $i = 0, \dots, r-1$ }
        \State $w[i] \gets (w[i] \cdot (1-\omega))$ \label{line:weight-upd}
        \State $U[i]$.resize($\ceil{w[i] \cdot t}$)  \Comment {\small adjust \# of USTs for round $i$ acc.\ to round weights} \label{line:repo-upd}
        \EndFor
        \State $w$.append($\omega$) \Comment{add weight of current round} \label{line:weight-ext}
        \For{$i \gets 1$ to $\ceil{\omega \cdot t}$}\label{line:ust-sampling-loop}
        \Comment {\small $\ceil{\omega \cdot t}$ times}
        \State $T_i \gets $ \textsc{Sampling}($G_r$, $a$, $b$) \label{line:sampling}
        \Comment{\small $\Oh(m \diam{G})$}\label{line:ust-sampling}
        \State $R_{new} \gets$ \textsc{Aggregate}($T_i$, $R_{new}$, $B_u$) \label{line:aggregation}
        \Comment {\small $\Oh(n \diam{G})$}
        \State $U[r].$append($T_i$) \label{line:store-ust}
        \EndFor \label{line:end-sampling}
        \State $R_{new} \gets R_{new}/\ceil{\omega \cdot t}$ \label{line:effres-scale}
        \State $R \gets \omega R_{new} + (1-\omega) R$ \Comment{Acc.\ to Lemma~\ref{lem:update-trees}} \label{line:new-R}
        \For{$v \in V\setminus \{u\}$}
        \Comment {\small All iterations: $\Oh(n)$}
        \State $\ment{\widetilde{\mat{L}^\dagger_{G'}}}{v}{v} \gets R[v] - \ment{\widetilde{\mat{L}^\dagger_{G}}}{u}{u} + 2 \ment{\widetilde{\mat{L}^\dagger_{G}}}{v}{u}$ \label{line:final-aggregation}
        \EndFor \label{line:end-fill}
        \State \textbf{return} $\operatorname{diag}(\widetilde{\Lpinv_{G'}})$
        \EndFunction
      \end{small}
    \end{algorithmic}
    \caption{Compute $\diag{\Lpinv_{G_r}}$ upon edge insertion}
    \label{alg_lpinv_ust_edge_added}
  \end{algorithm}
}

\extend{
  By applying Eq.~(\ref{eq:update-trees}) to the effective resistance estimates, we obtain
  Algorithm~\ref{alg_lpinv_ust_edge_added}. It obtains an approximation for $\diag{\Lpinv_{G'}}$,
  where $G'$ is obtained from $G$ by inserting an edge $e=\{a,b\}$.
  This algorithm is run each time after an edge is added to $G$.
  To obtain an initial set of USTs, the algorithm of Angriman \etal~\cite{apgm20} is applied to the original graph $G$.
  These USTs are stored in what we call the UST repository, which is used to also store USTs
  from graphs resulting from a series of edge insertions. All spanning trees together in this repository form
  a sufficiently large sample of USTs for the graph of the current round.
  Lines~\ref{line:start-init} and~\ref{line:effres-upd} initialize the vector of new resistance estimates and
  compute the effective resistance $\omega$ of the inserted edge $\{a, b\}$. The latter is necessary to scale the contribution
  of the USTs from this and previous rounds according to Lemma~\ref{lem:update-trees} (Line~\ref{line:new-R}).
  How many USTs each round contributes is governed by the round weight $w$; both numbers have to be adapted according to $\omega$
  (Lines~\ref{line:weight-upd} and~\ref{line:repo-upd}).
  After sampling and aggregating the new trees as well as updating $R$ (Lines~\ref{line:ust-sampling-loop} to~\ref{line:new-R}), the new diagonal
  approximation can be computed and returned.
}

\subsection{ {\sc *StochJLT}}
In this section we propose an improvement \extend{to \textsc{SimplStoch}} that exploits the following
observation: to evaluate the gain function for an arbitrary vertex pair
$\{a,b\}$, we only require to compute
the squared $\ell_2$-norm of two distance vectors:
$\shdG{G}{a}{b} = \norm{\Lpinv(\uvec{a} - \uvec{b})}^2$ and
$\effresG{G}{a}{b} = \norm{\mat{B}^T \Lpinv (\uvec{a} - \uvec{b})}^2$ (Eq.~(\ref{eq:denominator}-\ref{eq:nominator})).
Viewing $\shdG{G}{a}{b}$ and $\effresG{G}{a}{b}$ as pair-wise distances
between vectors in $\{\Lpinv\}_{a \in V}$ and
$\{\mat{B}^T\Lpinv\}_{a \in V}$ (respectively) allows us to apply the
Johnson-Lindenstrauss transform (JLT)~\cite{Johnson1984ExtensionsOL}.
In this case, pairwise distances among vectors are
nearly preserved if we project the vectors
onto a low-dimensional subspace,
spanned by \extend{$\Oh(\log{n}/\eta^2)$} random vectors.
The JLT lemma, \extend{in the improved version by Dasgupta and Gupta~\cite{DBLP:journals/rsa/DasguptaG03},} can be stated as:
\begin{lemma}
  \label{jlt}
  Given fixed vectors $\myvec{u_1} \ldots, \myvec{u_n} \in \mathbb{R}^d$ and $\eta > 0 $, let $\mat{Q} \in \mathbb{R}^{q\times d}$ be \extend{a random Gaussian matrix with entries from N(0,1) and $q > 24 \log{n}/\eta^2$.} Then with probability at least $1-1/n$
  \begin{equation}
    (1-\eta) \norm{\myvec{u_i}-\myvec{u_j}}^2 \leq \norm{\mat{Q}\myvec{u_i} - \mat{Q}\myvec{u_j}}^2 \leq (1+\eta)\norm{\myvec{u_i}-\myvec{v_j}}^2 \label{eqn_jlt_ineq}
  \end{equation}
  for all pairs $i,j \leq n$.
\end{lemma}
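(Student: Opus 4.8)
The plan is to follow the classical two-step argument behind the Johnson--Lindenstrauss lemma: first establish a sharp concentration bound for the squared length of a \emph{single} randomly projected vector, and then take a union bound over the $\binom{n}{2}$ difference vectors $\myvec{u_i}-\myvec{u_j}$.

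For the single-vector step, I would fix a pair $i\neq j$ and set $\myvec{x}:=\myvec{u_i}-\myvec{u_j}$; if $\myvec{x}=\mathbf{0}$ the inequality holds trivially, so assume $\myvec{x}\neq\mathbf{0}$. Each row $\mat{Q}_k$ of $\mat{Q}$ has i.i.d.\ $N(0,1)$ entries, so $\langle\mat{Q}_k,\myvec{x}\rangle\sim N(0,\norm{\myvec{x}}^2)$, and these inner products are independent across $k$. Consequently
\[
  Z \;:=\; \frac{\norm{\mat{Q}\myvec{x}}^2}{\norm{\myvec{x}}^2}
       \;=\; \sum_{k=1}^{q}\Big\langle\mat{Q}_k,\tfrac{\myvec{x}}{\norm{\myvec{x}}}\Big\rangle^{2}
\]
is a sum of $q$ i.i.d.\ squared standard Gaussians, i.e.\ $Z\sim\chi^2_q$ with $\EE{Z}=q$. (As is standard, one rescales $\mat{Q}$ by $1/\sqrt{q}$ --- equivalently draws its entries from $N(0,1/q)$ --- so that $\norm{\mat{Q}\myvec{x}}^2$ itself concentrates around $\norm{\myvec{x}}^2$; I carry the factor $q$ explicitly and absorb it at the end.)

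Next I would apply the standard Chernoff bound for $\chi^2$ variables: optimizing $\PP{Z\ge(1+\eta)q}\le e^{-t(1+\eta)q}\,\EE{e^{tZ}}$ over $t\in(0,\tfrac12)$ with $\EE{e^{tZ}}=(1-2t)^{-q/2}$, and symmetrically for the lower tail, yields
\[
  \PP{Z\ge(1+\eta)q}\le\exp\!\Big(-\tfrac{q}{2}\big(\eta-\ln(1+\eta)\big)\Big),
  \qquad
  \PP{Z\le(1-\eta)q}\le\exp\!\Big(-\tfrac{q}{2}\big(-\eta-\ln(1-\eta)\big)\Big).
\]
Restricting to $\eta\in(0,1)$ (the only regime of interest), the elementary bounds $\eta-\ln(1+\eta)\ge\tfrac{\eta^2}{2}-\tfrac{\eta^3}{3}$ and $-\eta-\ln(1-\eta)\ge\tfrac{\eta^2}{2}$ make both exponents at most $-\tfrac{q}{2}\big(\tfrac{\eta^2}{2}-\tfrac{\eta^3}{3}\big)$, so each tail probability is an inverse power of $n$ once $q$ exceeds a suitable constant multiple of $\log n/\eta^2$. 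A union bound over all $\binom{n}{2}<n^2/2$ pairs, with the two tail events apiece, then gives \eqref{eqn_jlt_ineq} for every pair simultaneously with probability at least $1-1/n$; the constant $24$ in the hypothesis $q>24\log n/\eta^2$ is precisely the value --- obtained from $\tfrac{\eta^2}{2}-\tfrac{\eta^3}{3}\ge\tfrac{\eta^2}{6}$ on $(0,1)$ --- for which this last estimate closes.

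The main obstacle is this quantitative bookkeeping rather than any conceptual point: one must pick the logarithmic estimates for $\eta-\ln(1+\eta)$ and $-\eta-\ln(1-\eta)$ sharply, verify they are valid on the whole range $\eta\in(0,1)$, and check that a \emph{single} threshold of the form $c\log n/\eta^2$ simultaneously controls both the upper and the lower $\chi^2_q$ tail after the $\binom{n}{2}$-fold (times two) union bound --- this is where the constant $24$ comes from. The remaining ingredients (rotation invariance of the Gaussian, which produces the $\chi^2_q$ law; the moment generating function computation; and the union bound itself) are all routine.
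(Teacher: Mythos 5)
The paper does not prove this lemma at all: it is quoted as a known external result, attributed to Johnson--Lindenstrauss in the sharpened form of Dasgupta and Gupta (the constant $24$ paired with failure probability $1/n$ is the variant popularized by Achlioptas and used by Spielman--Srivastava). So there is no in-paper argument to compare against; your chi-squared-concentration-plus-union-bound proof is the standard route and the right one. You are also correct to note that $\mat{Q}$ must implicitly be scaled by $1/\sqrt{q}$ (equivalently, have entries $N(0,1/q)$) for the inequality to hold as written.

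However, your final bookkeeping does not close as you claim. With the estimate $\eta-\ln(1+\eta)\geq\tfrac{\eta^2}{2}-\tfrac{\eta^3}{3}\geq\tfrac{\eta^2}{6}$ and $q>24\log n/\eta^2$, each tail probability is bounded by $\exp(-q\eta^2/12)<n^{-2}$, and the union bound over the $2\binom{n}{2}=n(n-1)$ tail events only gives failure probability at most $n(n-1)\cdot n^{-2}=1-1/n$, i.e.\ success probability at least $1/n$ --- which is the original Dasgupta--Gupta conclusion (sufficient for existence after $O(n)$ repetitions, but not the statement being proved). To reach success probability $1-1/n$ with the constant $24$ you need the sharper bound $\tfrac{\eta^2}{2}-\tfrac{\eta^3}{3}\geq\tfrac{\eta^2}{4}$, which holds for $\eta\leq 3/4$; then each tail is at most $\exp(-q\eta^2/8)<n^{-3}$ and the union bound gives failure probability below $n^2\cdot n^{-3}=1/n$. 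This is exactly Achlioptas's formulation with $\beta=1$ and $q\geq\frac{4+2\beta}{\eta^2/2-\eta^3/3}\log n$, and it is where the $24$ actually comes from; for $\eta>3/4$ one must either restrict the range (as all uses in this paper do, with $\eta=0.55$) or treat that regime separately. The remaining ingredients of your argument --- rotation invariance giving the $\chi^2_q$ law, the moment generating function computation, and the two Chernoff exponents --- are all correct.
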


Using Lemma \ref{jlt}, we can simply project matrices
$\Lpinv$  % $\in \mathbb{R}^{n\times n}$ 
and $\mat{B} \Lpinv$ % \in \mathbb{R}^{m\times n}$
onto $q$ vectors, \ie the $q$ rows of some random
matrices $\mat{P} \in \mathbb{R}^{q\times n}$ and $\mat{Q} \in \mathbb{R}^{q\times m}$, respectively.
To actually reduce the overall computation time,
we need to avoid the involved pseudoinversion. For that, we resort to
efficient linear system solvers.
Thus, combining the random projections technique with fast linear
solvers, one can approximate distances between vertex pairs
within a factor of $(1 \pm \eta)$
in $\Oh(I(n,m)\log{n} /\eta^2)$ time, where $I(n, m)$
is the running time of the Laplacian solver.

Hence to approximate $\shdG{G}{a}{b}$ and $\effresG{G}{a}{b}$,
we compute the projected distances
$\norm{\mat{P}\Lpinv (\uvec{a} - \uvec{b})} ^2$ and $\norm{\mat{Q}\mat{B}\Lpinv (\uvec{a} - \uvec{b})} ^2$, respectively.
\extend{One can} avoid the solution of two sets of Laplacian systems
by expressing the effective resistances directly
via the projection of (squared) biharmonic distances onto the lower dimensional space.
More precisely, \extend{one only solves} $\Lap \mat{Y} = \mat{P}^T-\frac{1}{n} \1\1^T \mat{P}^T$.
Due to $\Lpinv \cdot \frac{1}{n} \1\1^T = \mat{O}$ (the zero matrix), it follows $\mat{Y} = \Lpinv \mat{P}^T$, so that we can express effective resistances as follows:
\begin{equation}
  \begin{small} {
      \label{eq:jlt}
      \begin{split}
        & \norm{\mat{Q}\mat{B} \mat{Y} \mat{P} (\uvec{a} - \uvec{b})}^2 =
        (\uvec{a} - \uvec{b})^T \mat{P}^T \mat{Y}^T \mat{B}^T \mat{Q}^T\mat{Q} \mat{B} \mat{Y} \mat{P} (\uvec{a} - \uvec{b}) \\
        & = (\uvec{a} - \uvec{b})^T \Lpinv \mat{B}^T \mat{B} \Lpinv (\uvec{a} - \uvec{b}) =
        \norm{\mat{B} \Lpinv (\uvec{a} - \uvec{b})}^2,
      \end{split}
    }
  \end{small}
\end{equation}
\extend{where we assume that $\mat{Q}$ and $\mat{P}$ are orthonormal matrices. Note that there are formulations of the JLT with orthonormal matrices, including very early ones~\cite{Johnson1984ExtensionsOL,FRANKL1988355}.
  The formulation in Lemma~\ref{jlt} with random Gaussian entries is only ``almost'' orthogonal; this condition is usually sufficient in practice~\cite{DBLP:journals/jcss/Achlioptas03}.
  In our case this would mean that the equality in Eq.~(\ref{eq:jlt}) becomes ``approximately equal'', which would be sufficient for our heuristics as well.}

We can integrate the JLT approximation both in the context
of {\sc ColStoch} and {\sc SimplStoch} (having {\sc ColStochJLT} and {\sc SimplStochJLT}, respectively).
\extend{For both approaches, we set $\eta := 0.55$ in our experiments and thus consider it as a constant in the time complexity statements regarding \textsc{*StochJLT}.}
Let us consider the case of {\sc ColStochJLT}:
Again, the compute step is performed after selecting
set $\mathcal S$ (just after Line~\ref{line:candidate}).
Indeed, we compute the vectors in $\{\Lpinv\}_{a \in \mathcal S}$ and
$\{\mat{B}\Lpinv\}_{a \in \mathcal S}$ for $G_0$, where
$s := |\mathcal{S}| = n \sqrt {\frac 1 {k} \cdot\log(\frac 1 {\delta})}$.
Since, later, we only perform evaluations for pairs in
$\mathcal{S} \times \mathcal{S}$, it suffices to consider projections
onto $\log s$ rows (via $\mat{P} \in \mathbb{R}^{\log s \times n}$ and $\mat{Q} \in \mathbb{R}^{\log s \times m}$).
During the main loop of Algorithm~\ref{alg:stochastic} we perform the same number of overall function evaluations as in {\sc ColStoch}, that is $\Oh'(n^2)$. However, now a single function evaluation for an
arbitrary vertex pair takes $\mathcal O(\log s)$ via the formula
\begin{equation}
  \label{eq:gain-approx-jlt}
  \gain{a}{b} \approx \frac{\norm{\mat{P}\Lpinv(\uvec{a} - \uvec{b}))^2}} {1 + \norm{\mat{Q}\mat{B} \mat{Y} \mat{P} (\uvec{a} - \uvec{b})^2}}
\end{equation}
(up to a relative error of $(1+\eta)$).
For the update step, we need to sample new projections $\mat{P}$ and
$\mat{Q}$ and recompute the two matrices $\mat{P}\Lpinv$ and $\mat{Q}\mat{B} \mat{Y} \mat{P}$.
\extend{
  The dominant cost of the approach is due to the main loop,
  which takes $\mathcal O'( n^2 \log s)$ time.
  For {\sc SimplStochJLT}, the time
  complexity is $\mathcal O'( n^2 \log n)$.
}

\subsection{\textsc{SpecStoch}}
\label{sub:specstoch}
As the last approach in this section we propose to exploit the spectral expression of the gain function. More precisely, we combine the spectral expressions  of effective resistance and (squared) biharmonic distance~(Eqs.~\eqref{eq:nominator}
and~\eqref{eq:denominator}) to write Eq.~(\ref{eq:gain}) as
\begin{equation}
  \label{eq:gain-eigen}
  \begin{split}
    \gain{a}{b}  = n\cdot \frac{\sum_{i = 2}^{n} \frac{1}{(\lambda_i)^2} \cdot (\vent{u_i}{a} - \vent{u_i}{b})^2} {1 + \sum_{i = 2}^{n} \frac{1}{\lambda_i} \cdot (\vent{u_i}{a} - \vent{u_i}{b})^2} \,.
  \end{split}
\end{equation}

Eq.~(\ref{eq:gain-eigen}) benefits from the fact that both
effective resistance and (squared) biharmonic distance only depend on the spectrum
of the same matrix $\Lap$. Still, the full spectral decomposition
of $\Lap$ incurs $\Oh(n^3)$ time and is equally prohibitive as
computing $\Lpinv$ for larger $G$. To reduce the complexity,
we propose an approximation of Eq.~(\ref{eq:gain-eigen}) using
standard low-rank techniques~\cite{bozzo2012effective} and new bounds for both distances.
To do so, we exploit the fact that the bulk of the eigenvalues
tends to concentrate away from the smallest eigenvalues~\cite{Chung06}.
Moreover, we compute only a small number of eigenpairs
on the lower side of the spectrum of $\mat{L}$.
We expect that the smaller eigenpairs have a larger influence
on the sums of Eq.~(\ref{eq:gain-eigen}):
for small $i$, contributions %of $(\vent{u_i}{a} - \vent{u_i}{b})^2$ 
are accentuated by a large weight, $\frac{1}{\lambda_i^2}$ (recall that we index the eigenvalues
ordered non-decreasingly).
At the same time, the entries of eigenvector $\myvec{u_i}$ fluctuate slowly,
so we should carefully select $\{a,b\}$ to avoid near-zero contributions.
On the other hand, for large $i$, the eigenvectors $\myvec{u_i}$
fluctuate rapidly, since they correspond to high
frequency modes of the spectrum~\cite{spielman2012spectral}.
Their contribution to Eq.~(\ref{eq:gain-eigen})
is undermined by $\frac{1}{\lambda_i^2}$ (small for large $i$).
The above observations suggest that for a new edge insertion
$\{a,b\}$, the focus should be on eigenpairs corresponding to small $i$.

We now show how to derive bounds for $\shdG{G}{a}{b}$.
First we break Eq.~(\ref{eq:nominator}) into partial sums where $c \leq n$ is a cut-off value.

\begin{equation}
  \label{eq:up-bound}
  { \small
    \begin{split}
      & \shdG{G}{a}{b} % = \sum_{i = 2}^{n} \frac{(\vent{u_i}{a} - \vent{u_i}{b})^2} {\lambda_i^2}  \\ &
      =  \sum_{i = 2}^{c} \frac{(\vent{u_i}{a} - \vent{u_i}{b})^2} {\lambda_i^2}  + \sum_{i = c+1}^{n} \frac{(\vent{u_i}{a} - \vent{u_i}{b})^2}{{\lambda_i}^2}\\
      & \leq \sum_{i = 2}^{c} \frac{(\vent{u_i}{a} - \vent{u_i}{b})^2}{{\lambda_i}^2}   + \frac{1}{{\lambda_c}^2} \sum_{i = c+1}^{n} (\vent{u_i}{a} - \vent{u_i}{b})^2 \\
      & \leq \sum_{i = 2}^{c} \frac{(\vent{u_i}{a} - \vent{u_i}{b})^2}{{\lambda_i}^2}   + \frac{1}{{\lambda_c}^2} (2 - \sum_{i = 2}^{c}(\vent{u_i}{a} - \vent{u_i}{b})^2) \\
      & = \frac{2}{{\lambda_c}^2} + \sum_{i = 2}^{c} (\frac{1}{{\lambda_i}^2} - \frac{1}{{\lambda_c}^2}) (\vent{u_i}{a} - \vent{u_i}{b})^2   \,.
    \end{split}
  }
\end{equation}
The first inequality holds for large enough eigenvalues ($ \geq 1$),
since $\lambda_c \leq \lambda_{c+i}$ and $\frac{1}{(\lambda_c)^2} \geq \frac{1}{(\lambda_{c+i})^2}$  for any $i$.
Moreover, the third line comes from the following observation:
\begin{equation}
  \label{eq:low-bound}
  \begin{split}
    & \sum_{i = 2}^{n} (\vent{u_i}{a} - \vent{u_i}{b})^2 = \sum_{i = 1}^{n} (\vent{u_i}{a} - \vent{u_i}{b})^2 \\
    & = \sum_{i = 1}^{n} {\vent{u_i}{a}}^2 + \sum_{i = 1}^{n} {\vent{u_i}{b}}^2 - 2\sum_{i = 1}^{n} \vent{u_i}{a} \vent{u_i}{b} \\
    & = \norm{\myvec{u^T_a}}^2 + \norm{\myvec{u^T_b}}^2 -2 \mat{U}[a,:] \mat{U^T}[:,b] = 2
  \end{split}
\end{equation}
for $a \neq b$ since $\mat{U}$ is double-orthogonal. Moreover:
\begin{equation}
  \label{eq:low-bound}
  { \small
    \begin{split}
      & \shdG{G}{a}{b} % = \sum_{i = 2}^{n} \frac{(\vent{u_i}{a} - \vent{u_i}{b})^2} {\lambda_i^2}  \\ &
      =  \sum_{i = 2}^{c} \frac{(\vent{u_i}{a} - \vent{u_i}{b})^2} {\lambda_i^2}  + \sum_{i = c+1}^{n} \frac{(\vent{u_i}{a} - \vent{u_i}{b})^2}{{\lambda_i}^2}\\
      & \geq \sum_{i = 2}^{c} \frac{(\vent{u_i}{a} - \vent{u_i}{b})^2}{{\lambda_i}^2}   + \frac{1}{{\lambda_n}^2} \sum_{i = c+1}^{n} (\vent{u_i}{a} - \vent{u_i}{b})^2 \\
      & \geq \sum_{i = 2}^{c} \frac{(\vent{u_i}{a} - \vent{u_i}{b})^2}{{\lambda_i}^2}   + \frac{1}{{\lambda_n}^2} (2 - \sum_{i = 2}^{c}(\vent{u_i}{a} - \vent{u_i}{b})^2) \\
      & = \frac{2}{{\lambda_n}^2} + \sum_{i = 2}^{c} (\frac{1}{{\lambda_i}^2} - \frac{1}{{\lambda_n}^2}) (\vent{u_i}{a} - \vent{u_i}{b})^2           \,,
    \end{split}
  }
\end{equation}
where the inequality in the third line holds, since $\lambda_n \geq \lambda_{c+i}$ for any $i$.
Following the above, we can easily derive similar bounds for $\effresG{G}{a}{b}$.
Plugging those bounds together, we can approximate Eq.~(\ref{eq:gain-eigen}) using the following inequality:
\begin{equation}
  \label{eq:gain-low-approx}
  { \small
    \begin{split}
      & \frac{ \frac{2}{{\lambda_c}^2} + \sum_{i = 2}^{c} (\frac{1}{{\lambda_i}^2} - \frac{1}{{\lambda_c}^2}) (\vent{u_i}{a} - \vent{u_i}{b})^2} { 1 + \frac{2}{\lambda_n} + \sum_{i = 2}^{c} (\frac{1}{\lambda_i} - \frac{1}{\lambda_n}) (\vent{u_i}{a} - \vent{u_i}{b})^2}
      \leq   \gain{a}{b} \\
      & \leq \frac{ \frac{2}{{\lambda_n}^2} + \sum_{i = 2}^{c} (\frac{1}{{\lambda_i}^2} - \frac{1}{{\lambda_n}^2}) (\vent{u_i}{a} - \vent{u_i}{b})^2} { 1 + \frac{2}{\lambda_c} + \sum_{i = 2}^{c} (\frac{1}{\lambda_i} - \frac{1}{\lambda_c}) (\vent{u_i}{a} - \vent{u_i}{b})^2} \,.
    \end{split}
  }
\end{equation}

Adapting the general framework of Algorithm~\ref{alg:stochastic}
for {\sc SpecStoch} is rather straightforward:
In Line~\ref{line:compute} we compute the first $c$
eigenpairs along with the largest eigenvalue of $\Lap$ (corresponding to $G_0$).
We do so using standard iterative methods,
such as the Lanczos algorithm~\cite{PAIGE80},
which often takes only $\Oh(c m)$ time for sparse matrices~\cite{koch20118},
depending on the desired accuracy and eigenvalue distribution.
During the main loop, the algorithm
performs $\Oh'(n^2)$ function evaluations (dictated by the stochastic approach).
\extend{Assuming ``well-behaved'' eigenvalues,} single function evaluations in Line~\ref{line:single-eval} require only
$\Oh(c)$ time using the bounds in Eq.~(\ref{eq:gain-low-approx}).
Finally, we update the eigenpairs of $G_{r+1}$ in Line~\ref{line:update}.
To speed up the update step, we bootstrap the solution of the eigensolver
with the solution of the previous round.
\extend{Under our assumptions,
  the overall complexity of {\sc SpecStoch} is $\Oh'(n^2 c + k c m)$
  and in case both $c \in \Oh(1)$ and $k \in \Oh(1)$, the overall time complexity becomes $\Oh'(n^2)$.
}

\section{Heuristics for $k$-LRIP}
\label{sec:lrip}
\extend{
  Recall the idea of the $k$-LRIP problem:
  consider a fixed \emph{focus node} $v$.
  How can the robustness of the graph be improved when we restrict the edges that may be added to the graph to those that are incident to $v$?
  This problem is a local variant of $k$-GRIP in the sense that we can only add edges local to $v$.
  Still, we take a global view of the graph and try to improve the total graph resistance with no special consideration for $v$.

  Now assume there is a set $F$ of focus nodes and for each $v\in F$ we want to solve the $k$-LRIP problem independently.
  Then it makes sense to run the preprocessing steps of our algorithms just once and re-use the results when solving $k$-LRIP for each $v\in F$.

  In the following subsections we will describe how we adapt the heuristics from Section~\ref{sec:contrib} to $k$-LRIP.
  Let us mention a few general aspects first.
  Since we still optimize for the total graph resistance, the formulas derived for $k$-GRIP can generally be re-used; the gain only becomes a function of one (fixed focus) node now.
  Also the basic structure of Algorithm~\ref{alg:stochastic} remains the same in general.
  Some changes to note: recall from Section~\ref{sec:prelim} that the set of all candidates is $\Omega_v$.
  A candidate edge $e=\{v,b\}$ from this set is uniquely identified by $b$.
  That is why $a$ equals $v$ in Lines~\ref{line:candidate-iteration}-\ref{line:graph-update}.
  Moreover, in Line~\ref{line:candidate-iteration} we sample from $\mathcal{S}$ instead of $\mathcal{S} \times \mathcal{S}$.

  Compared to $k$-GRIP, the candidate set for $k$-LRIP is considerably smaller.
  This reduces the number of evaluations in each iteration of the main loop (per focus node).
  Table~\ref{tbl:complexities-lrip} shows the time and space complexities of all approaches for $k$-LRIP.
  For some heuristics, depending on the density of the graph, the dominant term becomes either the total number of evaluations (second column) or the update step (third column). If $m$ is considerably larger than $n$, \textsc{SimplStoch} may actually provide the best overall time complexity.

  \begin{table}[H]
    \caption{\extend{Time complexities (assuming standard (pseudo)inversion tools, linear solvers, and eigensolvers used in practice for Laplacians of general graphs)
        of all approaches involved for $k$-LRIP for one focus node. Columns correspond to major steps of Algorithm~\ref{alg:stochastic}.
        The $\widetilde{\ensuremath{\mathcal{O}}}$-notation hides $\log(1/\epsilon)$ factors, where $\epsilon$
        is the accuracy threshold of the linear solver. The ${\mathcal{O}}'$-notation
        hides $\log(1/\delta)$ factors, where $\delta$ determines the sample size in the stochastic candidate selection.
        Note that we consider the JLT parameter $\eta$ as a constant. The time complexity of the compute step is amortized over all focus nodes $F$.
        More details in the text.} }
    \centering

    % \scriptsize
    \resizebox{\columnwidth}{!}{%
      \begin{tabular}{lllll} \toprule
                            & Compute                                                            & \#Evals $\times$ SingleEval      & \extend{All updates}                                    & Memory                       \\
        \midrule
        {\sc StGreedy}      & $\Oh(\frac{n^3}{|F|})$                                             & $\Oh(kn)$ $\times$ $\Oh(n)$      & $\Oh(kn^2)$                                             & $\Oh(n^2)$                   \\
        {\sc SimplStoch}    & $\Oh(\frac{n^3}{|F|})$                                             & $\Oh'(n)$ $\times$ $\Oh(n)$      & $\Oh(kn^2)$                                             & $\Oh(n^2)$                   \\
        {\sc ColStoch}      & $ \widetilde{\ensuremath{\mathcal{O}}}(\frac{sm\log n}{|F|} )$     & $\Oh'(n)$ $\times$ $\Oh(n)$      & $ \widetilde{\ensuremath{\mathcal{O}}}(ksm\log n )$     & $\Oh((s+\tau)n) + m)$        \\
        {\sc SimplStochJLT} & $\widetilde{\ensuremath{\mathcal{O}}}(\frac{m\log n }{|F|})$       & $\Oh'(n)$ $\times$ $\Oh(\log n)$ & $\widetilde{\ensuremath{\mathcal{O}}}(km\log n )$       & $\Oh((s+\log n)n + m)$       \\
        {\sc ColStochJLT}   & $\widetilde{\ensuremath{\mathcal{O}}}(\frac{m\log n \log s}{|F|})$ & $\Oh'(n)$ $\times$ $\Oh(\log s)$ & $\widetilde{\ensuremath{\mathcal{O}}}(km\log n \log s)$ & $\Oh((\log s + \tau) n + m)$ \\
        {\sc SpecStoch}     & $\Oh(\frac{c m}{|F|})$                                             & $\Oh'(n)$ $\times$ $\Oh(c)$      & $\Oh(k c m)$                                            & $\Oh(c n + m)$               \\
        \bottomrule\end{tabular}
    }
    \label{tbl:complexities-lrip}
  \end{table}
}

\subsection{{\sc SimplStoch}}
\extend{
  In the case of \textsc{SimplStoch}, preprocessing includes the computation of the full pseudoinverse.
  When solving $k$-LRIP for multiple focus nodes, we store a copy of the pseudoinverse before we start the main loop of Algorithm~\ref{alg:stochastic}.
  This copy is used to skip the computation of $\Lpinv$ for the other focus nodes, reducing the time complexity of the \textsc{Compute} step to $\Oh(\frac{n^3}{|F|})$ per focus node (when amortized over all focus nodes). This approach is also applied to \textsc{StGreedy}.

  Regarding sampling, we still want to inspect a subset $\mathcal S$ of $\Omega_v$.
  During each iteration of the main loop we now sample uniformly at random $s \defeq \frac{n-1 - \deg(v)}{k}\log(\frac{1}{\delta})$ vertices
  (Line~\ref{line:candidate} in Algorithm~\ref{alg:stochastic}), resulting in $\Oh'(n)$ function evaluations overall; they are performed (as in $k$-GRIP) via Eq.~(\ref{eq:lpinv_update}) applied to $\Lpinv$ (obtained during preprocessing). When an edge is added to the graph, $\Lpinv$ is updated in the same way.
}

\subsection{{\sc ColStoch}}
\extend{
  For \textsc{ColStoch}, $\mathcal S$ is sampled from $\Omega_v$ as well.
  The sample size is $s \defeq \frac{n-1-\deg(v)}{k}\log(\frac{1}{\delta})$,
  which is also the same size as $\mathcal S$ in the case of \textsc{SimplStoch}.
  The concept of sampling only specific vertices (and thus reducing the required number of columns of  $\Lpinv$) that we described for $k$-GRIP has no significance here, since all edges already have one incident node (and therefore column of $\Lpinv$) in common.
  Hence, the sets from which we sample for \textsc{SimplStoch} and \textsc{ColStoch} from $k$-GRIP are
  the same when considering a fixed focus node $v$.
  The remaining difference is that we are still using graph-based sampling probabilities as described in Section~\ref{para:strategy} (instead of uniform sampling as in \textsc{SimplStoch}) and do not compute the full pseudoinverse; instead, we solve linear systems for each column of $\Lpinv$ corresponding to $\mathcal S$ again, including the lazy update strategy described for $k$-GRIP.

  The preprocessing in \textsc{ColStoch} consists of (i) setting up a linear solver that computes the required columns of $\Lpinv$ and (ii) computing the initial sampling probabilities for $\mathcal S$, which involves approximating $\diag\Lpinv$.
  The initial states of both the solver and $\diag\Lpinv$ are stored as a copy and can then be used to setup \textsc{ColStoch} before the main loop instead of re-computing them.

  Regarding the running time of the main loop, \textsc{ColStoch} may be slower than \textsc{SimplStoch} due to the additional time for approximating $\diag\Lpinv$.
  The overall time needs to consider the preprocessing as well -- how costly that is
  with the different methods depends (mostly) on the graph size and its density.
  We expect \textsc{ColStoch} to provide higher quality results than \textsc{SimplStoch}, though, since we are using graph-based probabilities instead of uniform sampling, as discussed in Section~\ref{para:strategy}.
}

\subsection{\textsc{*StochJLT}}

\extend{
  As in the case of $k$-GRIP, we calculate $\effresG{G}{v}{b}$ and $\shdG{G}{v}{b}$ using the JLT technique.
  In \text{*StochJLT}, preprocessing involves setting up the linear solver and computing the projection with the two matrices $\mat{P}$ and $\mat{Q}$.
  Again, results can be stored and used to initialize the solver for the next focus node
  (with $G$ reset to its original state).
}

\subsection{\textsc{SpecStoch}}

\extend{

  As for $k$-GRIP, the gain function only depends on the spectrum of $\mat{L}$.
  The integration of this approach into Algorithm~\ref{alg:stochastic} is similar to $k$-GRIP:
  in the compute step, the first $c$ eigenpairs and the largest eigenpair of $\mat{L}$ are computed using iterative solvers, usually taking $\mathcal{O}(cm)$ time. These are then stored for setting up the next focus node.
  Then, in the main loop, we use the eigenpairs to compute the gain in \textsc{Eval}.
  When adding an edge to the graph, we compute the eigenpairs again (Line~\ref{line:update}) and (as before) bootstrap the new solution process with the previous round to speed up the computation.

  Since we are restricted to a fixed focus node in $k$-LRIP, the search space (and number of calls to \textsc{Eval}) is reduced when compared to $k$-GRIP. However, for \textsc{SpecStoch}, this has less of an effect on the overall running time than for the other heuristics, since in \textsc{SpecStoch} a single evaluation is rather cheap and the expensive computations are shifted to the \textsc{Compute} and \textsc{Update} steps (where we compute eigenpairs).
  Hence, we expect that \textsc{SpecStoch} performs worse for $k$-LRIP than it does for $k$-GRIP.
}

\section{Experimental Results}
\label{sec:expes}

We conduct experiments to demonstrate the performance
of our contributions compared to {\sc StGreedy}.
%, the state-of-the-art approach for $k$-GRIP.
All algorithms are implemented in C++,
using the NetworKit~\cite{DBLP:journals/netsci/StaudtSM16} graph APIs.
Our test machine
\extend{for $k$-GRIP} is a shared-memory server
with a 2x 18-Core Intel Xeon 6154 CPU
and a total of 1.5 TB RAM.
\extend{For $k$-LRIP we use a machine with a Intel Xeon 6126 CPU and 192 GB RAM.}
To ensure reproducibility, experiments
are managed by SimexPal~\cite{angriman2019guidelines}.
Moreover, we use both synthetic and real-world input instances.
The synthetic ones follow the Erd\H{o}s-R\'enyi ({\sc ER}),
Barab\'asi-Albert ({\sc BA}) and Watts-Strogatz ({\sc WS}) models.
The real-world graphs are taken from SNAP~\cite{Leskovecxxxx} and NR~\cite{nr2015}, including
application-relevant power grid and road networks, see Table~\ref{tab:graphs}.
In this context, we consider
\extend{small graphs those whose vertex count is $< 10$K and} medium graphs those
whose vertex count is \extend{above that but below $57$K.}
The largest graph has around $129$K nodes.
To evaluate the quality of the solutions, we measure gain
improvements: $\mathcal R(G) - \mathcal R(G_k)$.
\extend{
  To this end, after selecting a new edge $\{a,b\}$ for insertion,
  $\gain{a}{b}$ is computed via a Laplacian system,
  for all approaches. This allows us to
  compare the results of different approaches in fair manner.
}
Our code and the experimental pipeline are available at~\url{https://github.com/hu-macsy/2023-kgrip-klrip}.
%\footnote{Public access will be granted after paper acceptance.}

\extend{
  We organize our experimental evaluation in three groups:
  first, we present experiments for configuring parameters.
  Second, we evaluate all approaches for $k$-GRIP in terms of quality and running time.
  Third, we evaluate all approaches for $k$-LRIP.
}
\begin{table}[tb]
  \centering
  \caption{ \small
    Summary of graph instances, providing (in order) network name, vertex count, and edge count.}
  \label{tab:graphs}
  \scriptsize
  \begin{tabular}{|lrr|}
    \hline
    Graph                 & $|V|$ & $|E|$ \\ \hline
    inf-power             & 4K    & 6K    \\
    facebook-ego-combined & 4K    & 8.8K  \\
    web-spam              & 4K    & 37K   \\
    Wiki-Vote             & 7K    & 100K  \\
    p2p-Gnutella09        & 8K    & 2.6K  \\ \hline
    p2p-Gnutella04        & 10K   & 39K   \\
    web-indochina         & 11K   & 47K   \\
    ca-HepPh              & 11K   & 117K  \\
    web-webbase-2001      & 16K   & 25K   \\
    arxiv-astro-ph        & 17K   & 196K  \\
    as-caida20071105      & 26K   & 53K   \\
    cit-HepTh             & 27K   & 352K  \\
    ia-email-EU           & 32K   & 54.4K \\ \hline
    loc-brightkite        & 57K   & 213K  \\
    soc-Slashdot0902      & 82K   & 504K  \\
    ia-wiki-Talk          & 92K   & 360K  \\
    flickr                & 106K  & 2.31M \\
    livemocha             & 104K  & 2.19M \\
    road-usroads          & 129K  & 165K  \\
    \hline
  \end{tabular}
\end{table}

\subsection{Configuration Experiments}\label{sec:configuration-experiments}
We start by evaluating the performance of {\sc SimplStoch} for different
accuracy values on the \extend{small and} medium graphs of Table~\ref{tab:graphs}.
Following the experiments in Ref.~\cite{Mirzasoleiman2015LazierTL}, we set
the accuracy parameter $\delta$ to $0.9$ and $0.99$ (which are reasonable values according to
the experiments of Ref.~\cite{Mirzasoleiman2015LazierTL} and our own preliminary experiments).
In Table~\ref{tab:sto-accu}, we see that there
is a clear trade-off between quality and running time,
controlled by the accuracy parameter.
Still, even for a large $\delta$,
the solution of {\sc SimplStoch} is not far off compared to {\sc StGreedy},
being only 8\% off in the worst data point ($k=2$).
We also note that the solution quality
is improved as $k$ becomes larger.
To benefit from that trade-off, in the following experiments we
set $\delta$ at $0.9$ for \extend{small and} medium graphs and $0.99$
for larger ones.

\begin{table}[H]
  \caption{ %\scriptsize
    Quality and speedup of {\sc SimplStoch } \\ (relative to {\sc StGreedy}) for different approximation bound.
  }
  \label{tab:sto-accu}
  \centering
  % \resizebox{\columnwidth}{!}{
  \begin{tabular}{|l|c|c|c|c|c|}
    \hline
    \multirow{2}{*}{\sc{SimplStoch}} & \multicolumn{5}{c|}{Relative Quality}                                      \\
    % TODO: \cline{2-6}
                                     & $k=2$                                 & $k=5$  & $k=20$ & $k=50$ & $k=100$ \\
    \hline
    $\delta =0.9$                    & 0.9662                                & 0.9610 & 0.9696 & 0.9810 & 0.9898  \\
    \hline
    $\delta =0.99$                   & 0.9239                                & 0.9241 & 0.9442 & 0.9559 & 0.9694  \\
    \hline
  \end{tabular}

  ~~~
  \vspace*{.15cm}

  \centering
  % \resizebox{\columnwidth}{!}{
  \begin{tabular}{|l|c|c|c|c|c|}
    \hline
    \multirow{2}{*}{\sc{SimplStoch}} &                                           %
    \multicolumn{5}{c|}{ Relative Speedup}                                       \\
                                     & $k=2$ & $k=5$ & $k=20$ & $k=50$ & $k=100$ \\
    \hline
    $\delta=0.9$                     & 2.6   & 2.6   & 2.7    & 2.7    & 3.1     \\
    \hline
    $\delta=0.99$                    & 4.0   & 3.9   & 4.2    & 4.1    & 4.6     \\
    \hline
  \end{tabular}
  % }
\end{table}

Additionally, we perform configuration experiments to determine the quality of the gain approximation
via Eq.~(\ref{eq:gain-low-approx}) for {\sc SpecStoch}.
To do so, we randomly select a vertex pair %($\in V\times V \setminus E$) 
and compute Eq.~(\ref{eq:gain-low-approx}) for different numbers of eigenvectors.
We measure the relative error of the approximation compared
to a full spectrum computation. In Fig.~\ref{fig:eigen} we depict the results for
synthetic graphs and eigenvector number from $1$ to $n=1000$.
Even for a few tens of eigenvectors,
the relative errors for {\sc WS} and {\sc ER}
are already quite small. The relative error for {\sc BA} is larger and would require a couple of
hundreds eigenvectors to achieve a similar approximation.
\begin{figure}[tb]
  \centering
  \subfigure[\textsc{WS}: $40$ avg. degree, rewir. prob. $0.01$]{\includegraphics[clip, trim=3.5cm 8.7cm 4.5cm 8.7cm, width=0.3\textwidth]{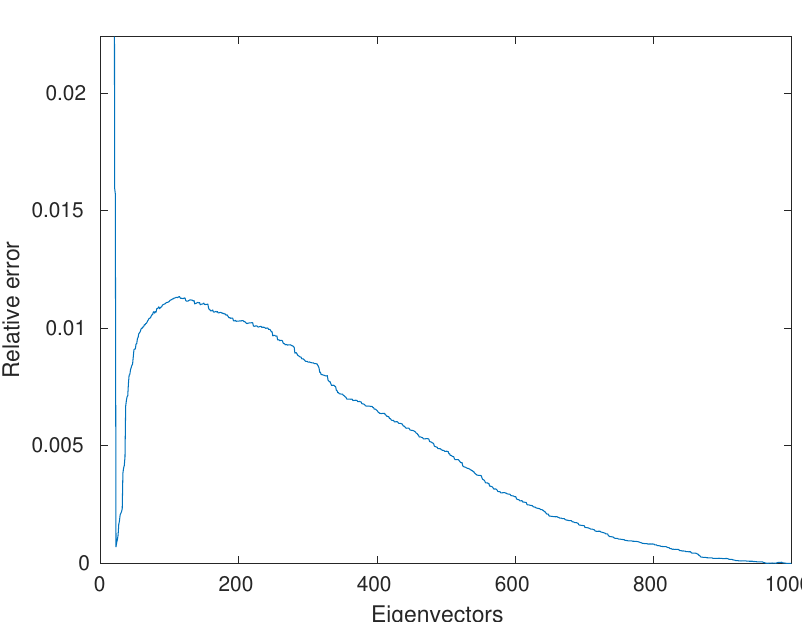}}
  \subfigure[\textsc{ER}: probability $p = 0.01$]{
    \includegraphics[clip, trim=3.7cm 8.7cm 4.5cm 8.7cm, width=0.3\textwidth]{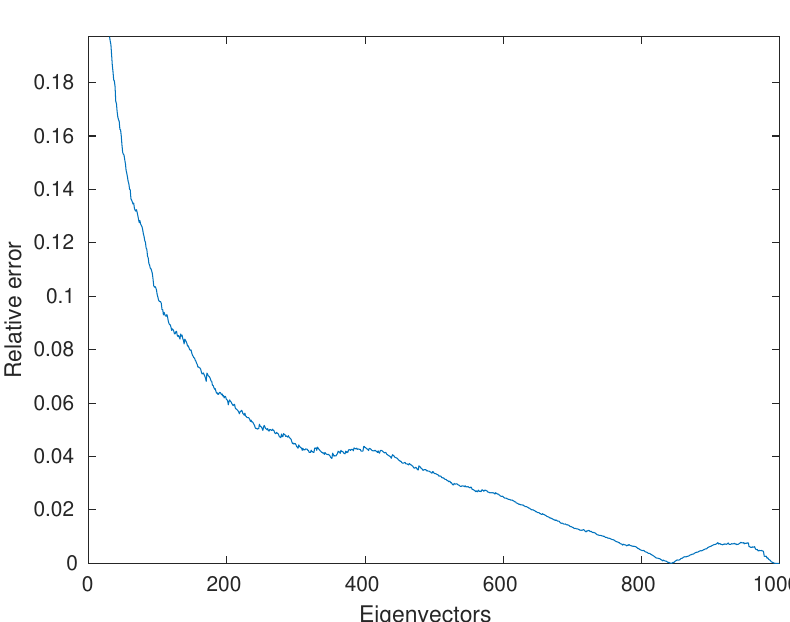}}
  \subfigure[\textsc{BA}: $m_0 = m = 4$]{\includegraphics[clip, trim=4cm 8.7cm 4.5cm 8.7cm, width=0.3\textwidth]{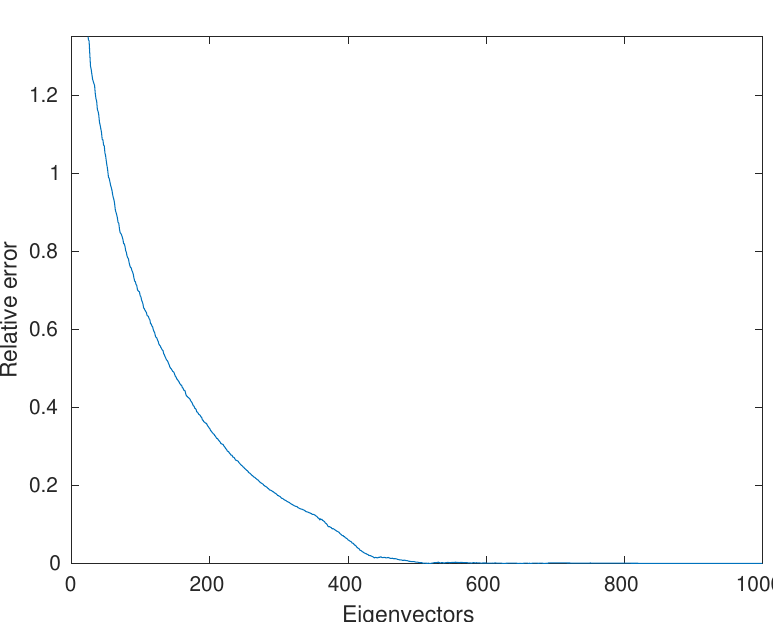}}

  \caption{Relative error of gain via Eq.~(\ref{eq:gain-low-approx}) for different number of eigenvectors.
  }
  \label{fig:eigen}
\end{figure}

Finally, we experiment with different solvers for the solution of Laplacian linear systems.
We decided to use the sparse LU solver from the Eigen~\cite{eigenweb} library
for \extend{small and} medium graphs and the LAMG solver~\cite{lamg} from NetworKit for larger ones.
We do so, because
LAMG exhibits a better empirical running time for larger complex networks than other
Laplacian solvers.
For the solution of the eigensystem (required by {\sc SpecStoch}),
we use the Slepc~\cite{Hernandez:2005} library.

\subsection{Results for $k$-GRIP}

\begin{figure}[H]
  \centering
  \subfigure[Quality]{
    \label{exp:small-quality}
    \includegraphics[width=0.46\textwidth]{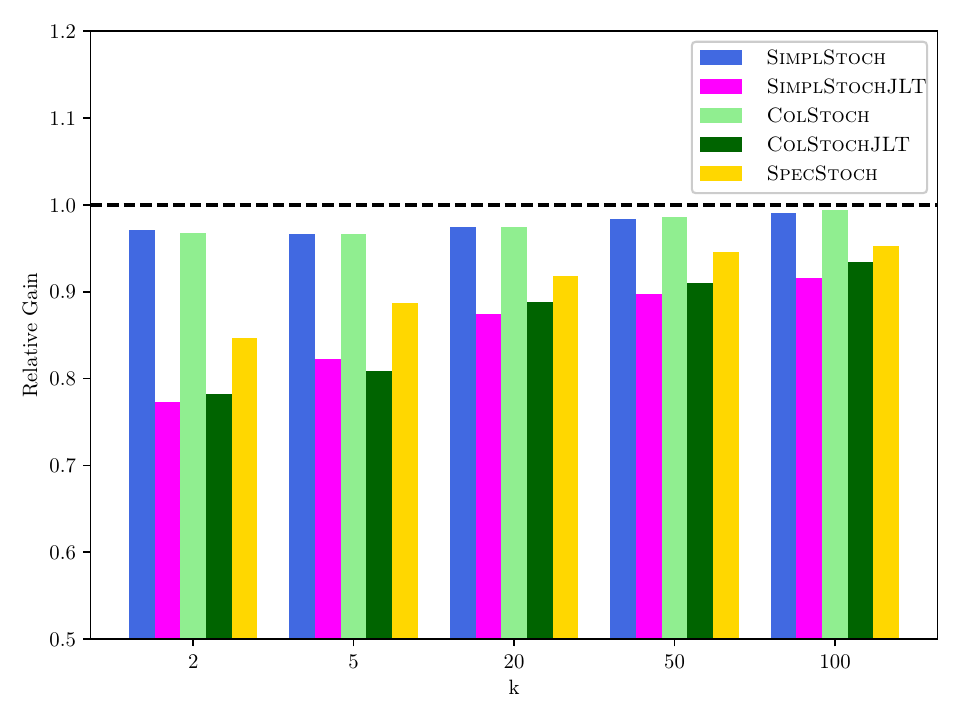}
  }
  \subfigure[Speedup]{
    \label{exp:small-time}
    \includegraphics[width=0.46\textwidth]{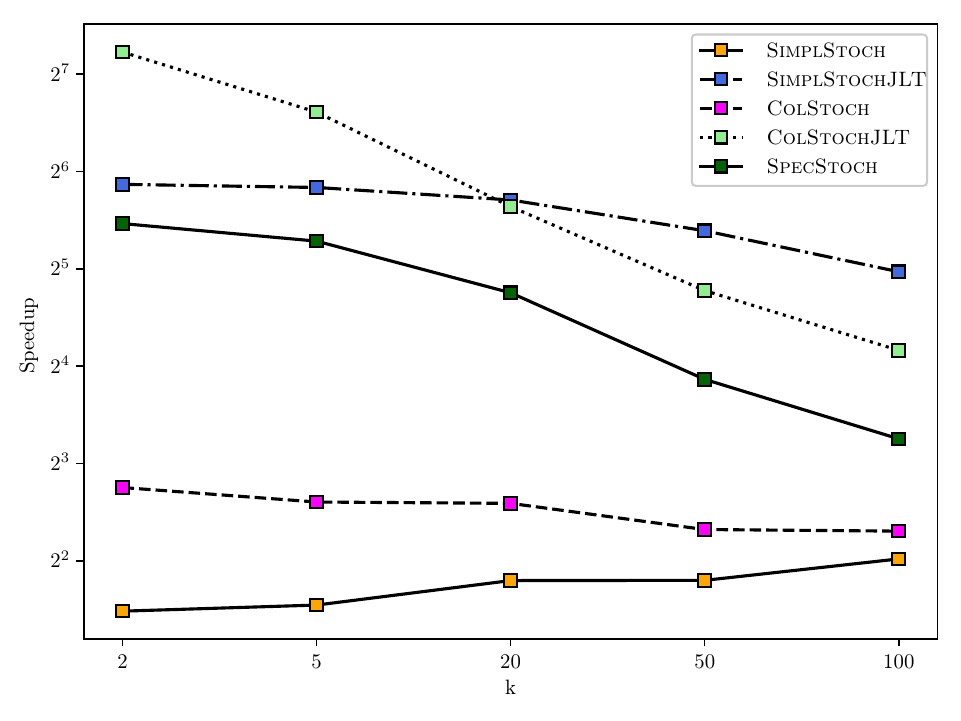}
  }
  \label{exp:comp-small}
  \caption{
    Aggregated results (via geometric mean) of $k$-GRIP on medium graphs ($n < 57K$) for different $k$. Results are relative to \textsc{StGreedy}. }
\end{figure}

We first compare our approaches on the \extend{small and} medium graphs of Table~\ref{tab:graphs}, configured
according to the previous section. Closely behind {\sc StGreedy}, {\sc SimplStoch}
and {\sc ColStoch} produce the best solutions and they
are on average 2\% away from the reference (Fig.~\ref{exp:small-quality}).
Moreover, {\sc SpecStoch}, {\sc SimplStochJLT} and {\sc ColStochJLT}
are away by 9\%, 14\% and 15\%, respectively.
Regarding running time, the
JLT-based approaches are the fastest, being on average 48$\times$ ({\sc SimplStochJLT})
and 68$\times$ ({\sc ColStochJLT}) faster than {\sc StGreedy}
(Fig.~\ref{exp:small-time}).
The scaling of {\sc ColStochJLT} is worse than that of {\sc SimplStochJLT}
for large $k$. This is due to the update step of Algorithm~\ref{alg:stochastic},
where {\sc ColStochJLT} needs to update both the effective resistance metric and the
necessary operations for JLT. Although the slowest, {\sc SimplStoch} has a good
scaling behavior as it performs only few computations in the
update step and thus is (mostly) independent of $k$.
Overall, {\sc SpecStoch} \extend{may be the
  best approach for medium graphs in a wide variety of applications since it produces good quality}
results and is on average 26$\times$ faster than {\sc StGreedy}.
\extend{Detailed runtime results are available in Table~\ref{tab:grip-runtimes}}.
A disadvantage of {\sc SpecStoch} is that the running time becomes
worse as $k$ grows due to the $k$ eigensystem updates.

\begin{table}[H]
  \centering
  \subfigure[k=2]{
    \label{tab:2grip-runtimes}
    \scriptsize
    \begin{tabular}{lrrrrrr}
      \toprule
      \multirow{2}{*}{Algorithm} & \textsc{Simpl} & \textsc{Simpl} & \textsc{Col}   & \textsc{Col}   & \textsc{Spec}  & \multirow{2}{*}{\textsc{StGreedy}} \\
                                 & \textsc{Stoch} & \textsc{JLT}   & \textsc{Stoch} & \textsc{JLT}   & \textsc{Stoch} &                                    \\
      \midrule
      inf-power                  & 50.0           & \textbf{1.6}   & 10.1           & 3.2            & 4.0            & 118.3                              \\
      facebook-ego-combined      & 18.6           & 1.7            & 5.8            & \textbf{0.7}   & 4.1            & 46.0                               \\
      web-spam                   & 29.2           & 3.4            & 17.8           & \textbf{1.8}   & 5.6            & 68.5                               \\
      Wiki-Vote                  & 110.1          & 9.0            & 65.7           & \textbf{5.4}   & 12.7           & 357.6                              \\
      p2p-Gnutella09             & 137.3          & 13.0           & 94.6           & \textbf{8.1}   & 16.3           & 296.0                              \\
      p2p-Gnutella04             & 452.5          & 38.2           & 297.0          & \textbf{28.1}  & 40.1           & 1163.2                             \\
      web-indochina              & 489.6          & 10.6           & 95.9           & \textbf{1.9}   & 15.2           & 1700.3                             \\
      ca-HepPh                   & 479.2          & 24.1           & 261.1          & \textbf{15.1}  & 31.8           & 1312.5                             \\
      web-webbase-2001           & 1634.4         & 20.1           & 292.5          & \textbf{2.4}   & 25.8           & 6402.5                             \\
      arxiv-astro-ph             & 1696.5         & 166.4          & 1426.0         & \textbf{135.4} & 165.4          & 5628.3                             \\
      as-caida20071105           & 6664.3         & 93.7           & 1434.5         & \textbf{8.0}   & 88.9           & 17544.7                            \\
      cit-HepTh                  & 4956.7         & 893.4          & 6973.9         & \textbf{816.0} & 871.2          & 13818.5                            \\
      ia-email-EU                & 11719.3        & 108.5          & 2491.7         & \textbf{5.2}   & 101.8          & 32679.4                            \\
      \bottomrule
    \end{tabular}
  }
  \subfigure[k=100]{
    \label{tab:100grip-runtimes}
    \scriptsize
    \begin{tabular}{lrrrrrr}
      \toprule
      \multirow{2}{*}{Algorithm} & \textsc{Simpl} & \textsc{Simpl} & \textsc{Col}   & \textsc{Col}   & \textsc{Spec}   & \multirow{2}{*}{\textsc{StGreedy}} \\
                                 & \textsc{Stoch} & \textsc{JLT}   & \textsc{Stoch} & \textsc{JLT}   & \textsc{Stoch}  &                                    \\
      \midrule
      inf-power                  & 41.5           & \textbf{3.4}   & 125.3          & 140.1          & 63.6            & 594.6                              \\
      facebook-ego-combined      & 19.9           & \textbf{14.3}  & 19.1           & 23.5           & 56.6            & 139.3                              \\
      web-spam                   & 31.0           & \textbf{18.3}  & 50.8           & 32.5           & 80.2            & 79.3                               \\
      Wiki-Vote                  & 121.4          & \textbf{45.1}  & 122.1          & 50.4           & 133.0           & 428.9                              \\
      p2p-Gnutella09             & 141.3          & \textbf{51.5}  & 173.7          & 54.7           & 114.5           & 314.3                              \\
      p2p-Gnutella04             & 448.2          & \textbf{129.3} & 580.5          & 133.7          & 164.6           & 1298.2                             \\
      web-indochina              & 512.6          & \textbf{18.3}  & 137.8          & 73.6           & 161.1           & 3524.8                             \\
      ca-HepPh                   & 498.4          & \textbf{88.6}  & 439.2          & 136.9          & 245.1           & 1499.1                             \\
      web-webbase-2001           & 1520.0         & \textbf{23.9}  & 295.4          & 74.5           & 209.9           & 14802.7                            \\
      arxiv-astro-ph             & 1730.6         & 469.1          & 2649.4         & 523.3          & \textbf{467.6}  & 7711.7                             \\
      as-caida20071105           & 7712.0         & 130.2          & 1630.8         & \textbf{113.1} & 475.7           & 18350.9                            \\
      cit-HepTh                  & 4932.2         & 1960.1         & 13094.1        & 2122.5         & \textbf{1544.7} & 11253.3                            \\
      ia-email-EU                & 11820.4        & 136.0          & 3000.7         & \textbf{65.7}  & 428.6           & 32771.2                            \\
      \bottomrule
    \end{tabular}
  }
  \caption{
    Runtime results in seconds for $k$-GRIP for $k=2$ and $k=100$ for medium graphs. For each instance, the fastest solver is emphasized.}
  \label{tab:grip-runtimes}

\end{table}

Finally, in Fig.~\ref{fig:large-comp} we depict results for the
large graphs of Table~\ref{tab:graphs}. For this experiment
we report absolute values since we do not have
a clear reference.
\extend{With a time limit of $12$ hours, {\sc StGreedy} always times out.}
These results show that a cubic approach such as \textsc{StGreedy}
becomes impractical once the number of nodes in the graph exceeds
a certain threshold (such as $57$K in our tests).
The best approaches for large graphs are
  {\sc ColStochJLT} and {\sc ColStoch}. Both of them produce the highest
quality results, \extend{with {\sc ColStoch} slightly ahead.}
{\sc ColStochJLT} is the fastest approach, requiring on
average $2$ [resp.\ $20$] minutes for $k=2$ [resp.\ $k = 20$].
{\sc SpecStoch} is on average as fast as {\sc ColStoch} but its performance
depends a lot on spectral properties (clustered eigenvalues or not) of
each input, as shown by the degree of skewness in Fig.~\ref{fig:large-comp}.

\begin{figure}[H]
  \centering
  \subfigure[Quality ]{
    \label{exp:large-quality}
    \includegraphics[width=0.46\textwidth]{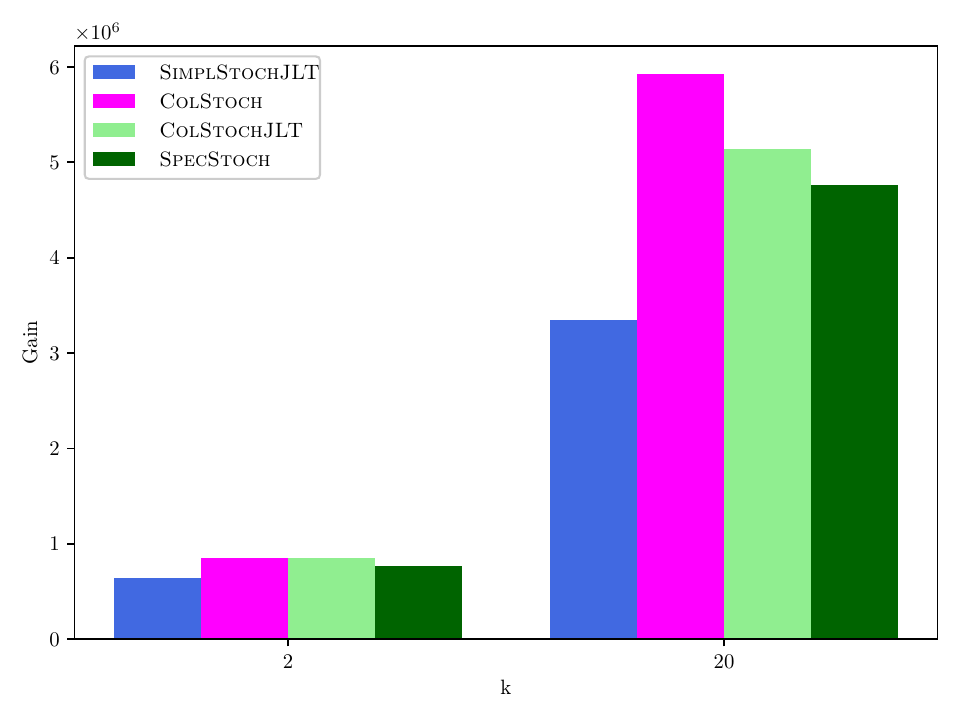}
  }
  \subfigure[Running time (log scale)]{
    \label{exp:large-time}
    \includegraphics[width=0.46\textwidth]{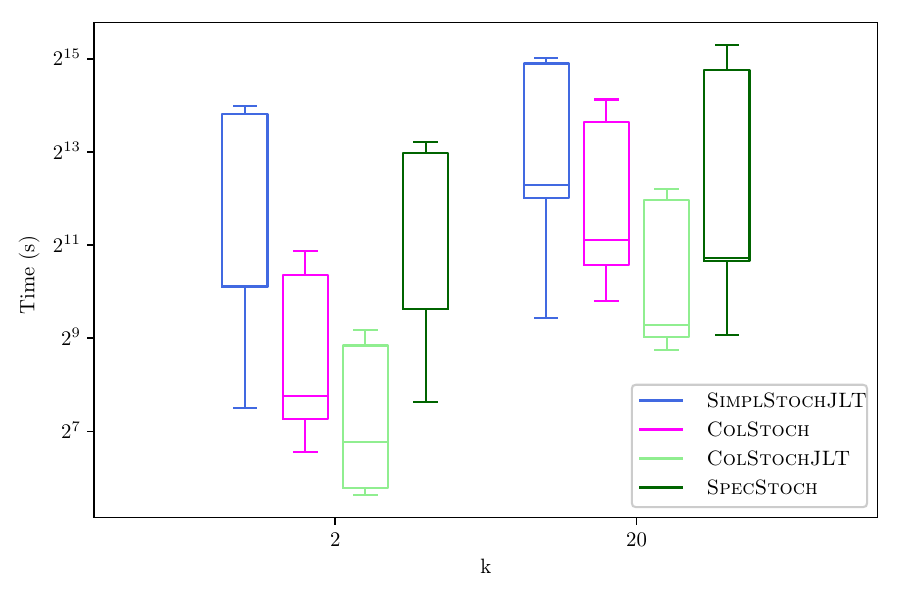}
  }
  \caption{Aggregated results (via geometric mean) of $k$-GRIP on large graphs ($n \geq 57K$) for different $k$.}
  \label{fig:large-comp}
\end{figure}

\subsection{Results for $k$-LRIP}
\extend{
  For $k$-LRIP, we use the same parameter settings determined in Section~\ref{sec:configuration-experiments}
  and choose $25$ focus nodes at random to run our algorithms on, with a $6$ hour time limit for each experiment.
  We evaluate $k\in\{2,5,20\}$, which means that we have to compute up to $20\cdot25 = 500$ \textsc{Update} steps overall -- up to $5\times$ more than for $k$-GRIP.
  At the same time, the number of \textsc{Eval} computations is reduced as described in Section~\ref{sec:lrip}.
  (One could of course increase $k$ and decrease the number of focus nodes at the same time and reach about the same number of \textsc{Update} and \textsc{Eval} calls.)

  Quality and speedup shown in this section are the geometric mean of the results for all focus nodes
  (in relation to the baseline \textsc{StGreedy}).
  Absolute running times are aggregated using the arithmetic mean.
  When comparing the running time of $k$-LRIP, we compute the running time for a focus node by taking the actual execution time of the main loop of Algorithm~\ref{alg:stochastic} for that focus node and add to this $\frac{1}{|F|}$ ($\frac{1}{25}$ in our case) of the preprocessing time, such that the preprocessing time is amortized over all focus nodes.

  For the evaluation we first compare the solution quality for the small and medium graphs of Table~\ref{tab:graphs}, see Figure~\ref{exp:lrip-smallmed-quality}.
  Here, \textsc{ColStoch} produces the best results, followed by \textsc{ColStochJLT}.
  Depending on $k$, \textsc{ColStoch} produces results that are on average 4\%-12\% away from \textsc{StGreedy}.
  The \textsc{SimplStoch*} results are 20\%-30\% away from \textsc{StGreedy}, showing that our graph-based sampling technique applied in \textsc{ColStoch} does improve the quality of the solution.
  \textsc{SpecStoch} appears to be not competitive.

  \begin{figure}[H]
    \centering
    \includegraphics[width=0.46\textwidth]{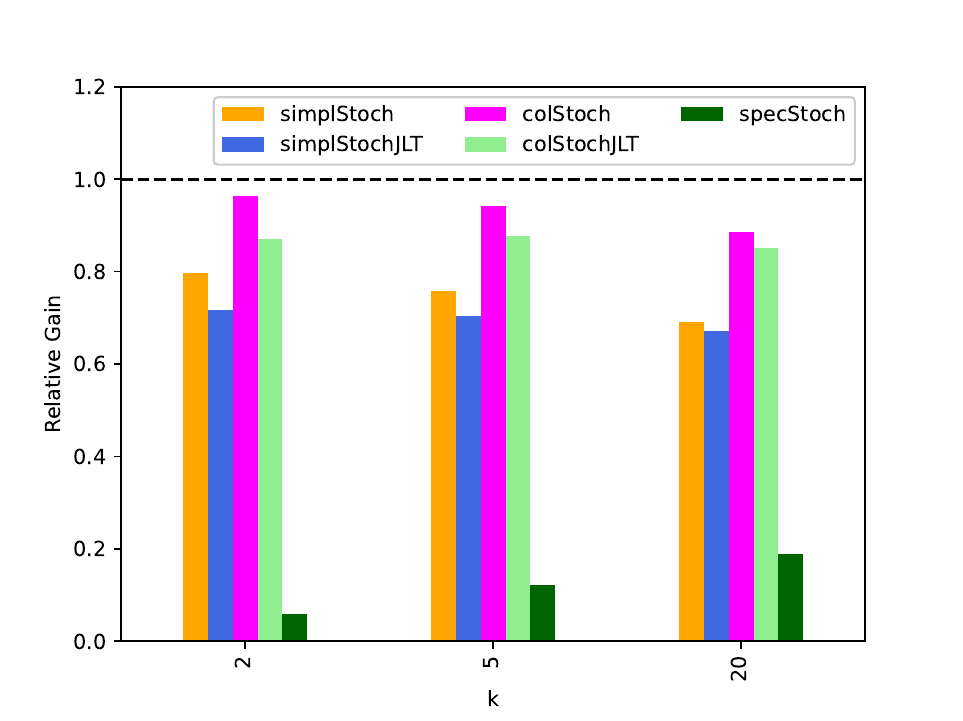}
    \caption{Aggregated quality results (using geometric mean) of $k$-LRIP on small and medium graphs ($n < 57K$) for different $k$. Results are relative to \textsc{StGreedy}. }
    \label{exp:lrip-smallmed-quality}
  \end{figure}

  Next, we take a look at the preprocessing time for our approaches (Figure~\ref{exp:lrip-preprocessing}).
  As expected, we can see a clear difference between the approaches that compute the full pseudoinverse (\textsc{stGreedy} and \textsc{simplStoch}) and those that set up a linear solver.
  The preprocessing time for the solver-based heuristics depends on the density of the graph.
  A good example of this observation is the difference in preprocessing time for the two instances \emph{web-indochina-2004} and \emph{arxiv-heph}.
  Both graphs have about the same number of nodes, but \emph{arxiv-heph} contains about 2.5x more edges, which increases the preprocessing time for \textsc{ColStoch}, \textsc{ColStochJLT} and \textsc{SimplStochJLT} by an order of magnitude.
  Still, the solver setup is considerably faster than calculating $\Lpinv$, being up to three orders of magnitude faster for larger, sparse graphs.
  Generally, the preprocessing of the \textsc{*JLT} variants is slightly slower than without JLT, since we have to setup the projection as well.
  Computing the eigenpairs for \textsc{SpecStoch} is faster than calculating $\Lpinv$, but slower than setting up linear solvers.
  One should keep in mind, though, that for \textsc{SpecStoch} this preprocessing computation is mostly the time to calculate the eigenpairs, which is the same computation required for the edge insertion update for \textsc{SpecStoch}.

  \begin{figure}[H]
    \centering
    \includegraphics[width=0.7\textwidth]{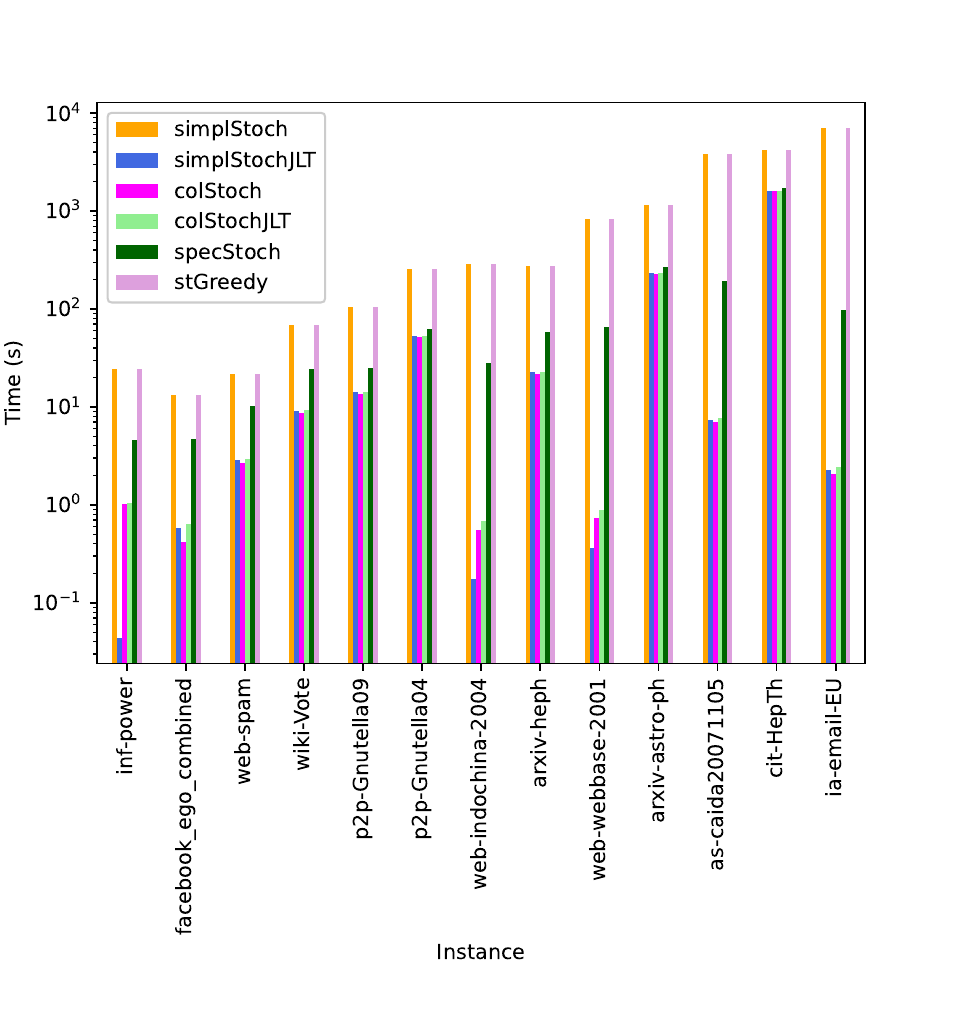}
    \caption{Preprocessing times for different graphs, taking the arithmetic mean over all $k$. See Table~\ref{tab:graphs} for size information. }
    \label{exp:lrip-preprocessing}
  \end{figure}

  Finally, we compare the running time of our approaches.
  We split the speedup results into two figures for small and medium graphs, respectively (Figure~\ref{exp:lrip-time-smallmed}).
  For both cases, \textsc{SpecStoch} has an average speedup of less than one.
  This is due to the large number of eigenpair computations required, which are slow, as we have seen in preprocessing.
  For this reason, most experiments with medium graphs and $k=20$ did not finish for \textsc{SpecStoch}.

  \begin{figure}[H]
    \centering
    \subfigure[Small Graphs ]{
      \label{exp:lrip-small-speedup}
      \includegraphics[width=0.46\textwidth]{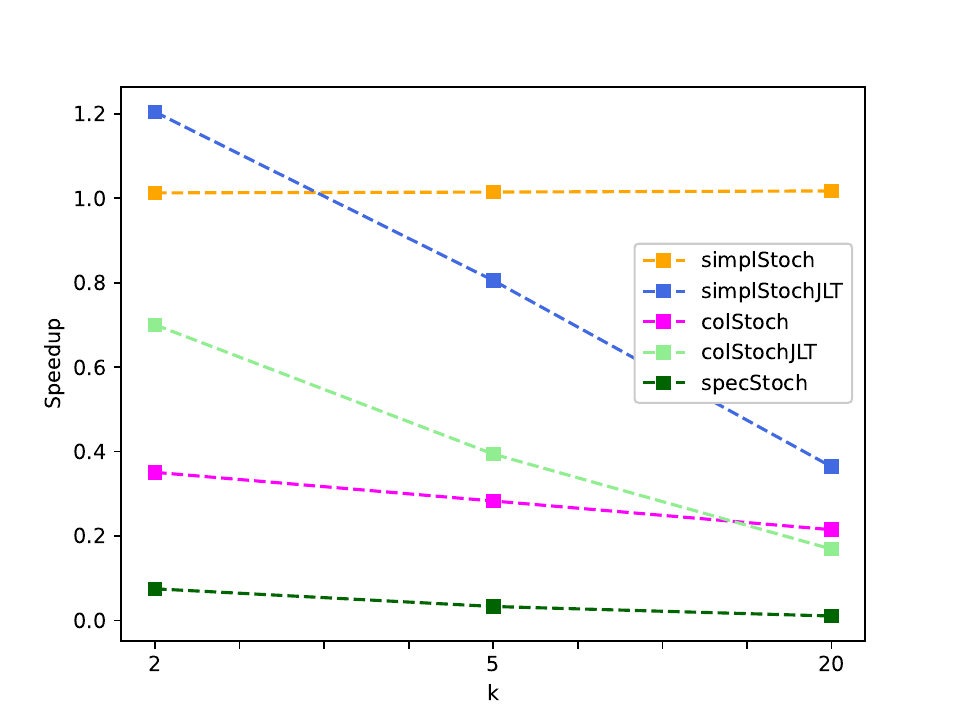}
    }
    \subfigure[Medium Graphs]{
      \label{exp:lrip-med-speedup}
      \includegraphics[width=0.46\textwidth]{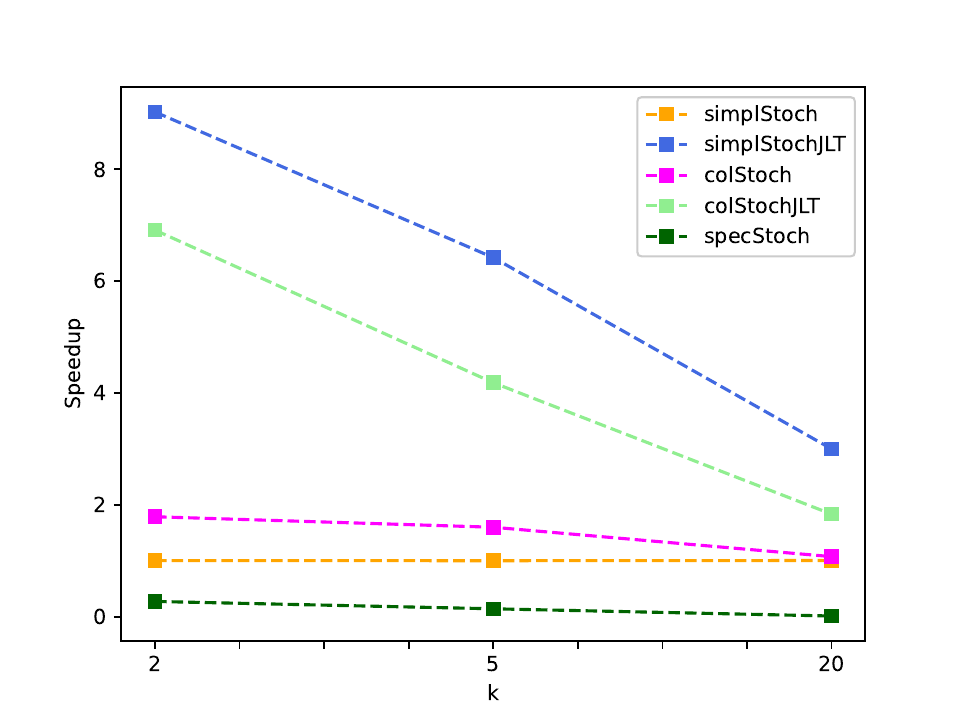}
    }
    \caption{Aggregated speedup results of $k$-LRIP on small and medium graphs for different $k$. Results are relative to \textsc{StGreedy}. }
    \label{exp:lrip-time-smallmed}
  \end{figure}

  Regarding the other heuristics: for small graphs, \textsc{SimplStoch} is the fastest algorithm, with an average speedup of $1.01$.
  The other algorithms are slower than \textsc{StGreedy}, because computing $\Lpinv$ for a small graph is still fast enough in practice and the update step generally is fast as well.
  Considering that all approaches finish in at most $12$ seconds (Figure~\ref{exp:lrip-small-absolute-time}),
  \textsc{StGreedy} is fast enough, so that these small graphs do not require (and do not benefit from) more complicated heuristics.

  For the medium graphs, \textsc{SimplStochJLT} is the fastest approach with a speedup of up to $9\times$ for $k=2$.
  This is to be expected, since the JLT strategy generally reduces computation time.
  The second fastest solution is \textsc{ColStochJLT}, which is explained by the additional time required to approximate $\diag\Lpinv$.
  \textsc{ColStoch} is still faster than \textsc{SimplStoch} for small $k$, but for $k=20$ both are almost equal.
  We also notice that for the \emph{cit-HepTh} graph, which is considerably denser than all other graphs ($m=352$K), the solver-based heuristics (\textsc{simplStochJLT}, \textsc{colStochJLT} and \textsc{colStoch}) time out, while the $\Lpinv$-based heuristics do not. The reason for this is that the time complexity of the solver update step depends on $m$.

  \begin{figure}[H]
    \centering
    \subfigure[Small Graphs ]{
      \label{exp:lrip-small-absolute-time}
      \includegraphics[width=0.46\textwidth]{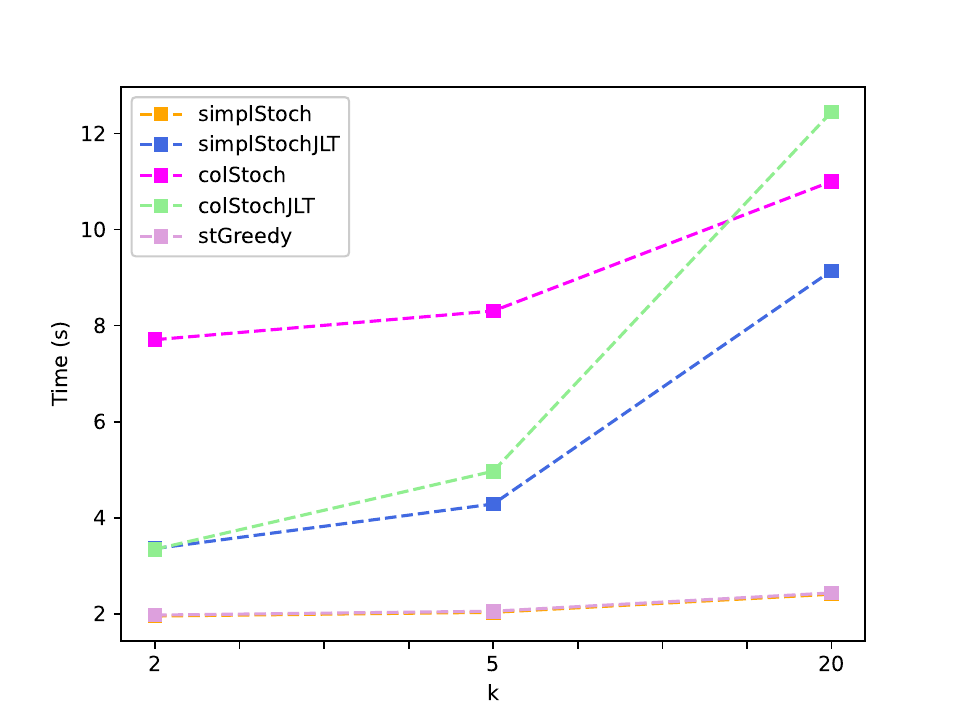}
    }
    \subfigure[Medium Graphs]{
      \label{exp:lrip-med-absolute-time}
      \includegraphics[width=0.46\textwidth]{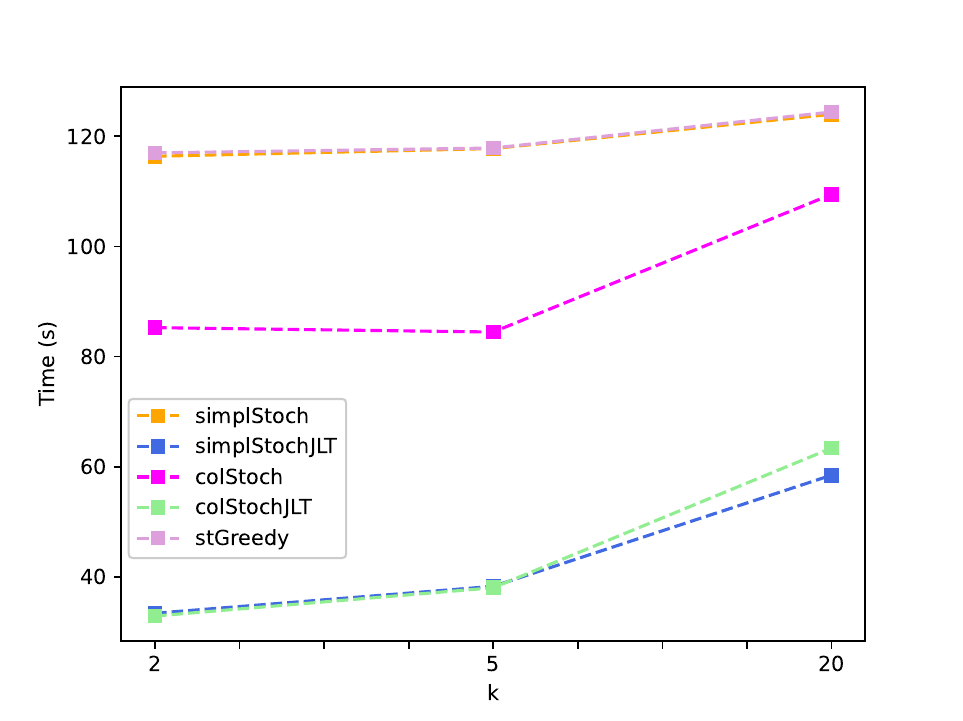}
    }
    \caption{Aggregated running time results of $k$-LRIP on small and medium graphs for different $k$. Results for \textsc{SpecStoch} are orders of magnitude larger and not shown here for readability. }
    \label{exp:lrip-time-smallmed-absoulte}
  \end{figure}

  Even though the preprocessing itself is more expensive for \textsc{SimplStoch}, once $\Lpinv$ is computed, the update step is considerably cheaper than in the case of linear solvers and as such \textsc{SimplStoch} is competitive for larger $k \cdot |F|$, where there are many updates, as long as computing $\Lpinv$ is feasible. Of course, for large enough graphs, one cannot compute $\Lpinv$ in reasonable time as we have seen for the large graphs in $k$-GRIP.

  Overall, based on these results the choice of the \emph{best} heuristic depends on $k$, $|F|$, and the density of the graph.
  In general, there is a trade-off between running time and quality. For the fastest solution, one should choose \textsc{SimplStochJLT}.
  When quality is the larger concern, \textsc{ColStoch} provides good results. With \textsc{ColStochJLT} there is also an option in the middle,
  providing good quality and time.

}

\section{Conclusions}
\label{sec:conc}

To conclude, our randomized techniques for speeding up the state-of-the-art greedy algorithm for $k$-GRIP
do pay off. For medium-sized graphs, {\sc ColStoch} provides already a decent $6\times$ acceleration with a quality close to greedy's.
Here, a subset of vertices $i$ is selected
for which $\ment{\Lpinv}{i}{i}$ and, thus, their summed effective resistances are large.
When favoring speed over quality, {\sc SpecStoch}, which exploits spectral properties of the graph,
offers an alternative (on average $28\times$ faster than greedy).
For larger graphs and whenever high quality is desirable,
the best option is {\sc ColStoch}.
When running time is important and a decrease in quality is allowed, {\sc ColStoch}
can still be significantly accelerated by JLT, i.\,e., {\sc ColStochJLT}.

\extend{Similar results can be observed for the related $k$-LRIP problem. Some differences occur, though: for small graphs (roughly \numprint{10000} nodes or less), \textsc{StGreedy} is fast
  enough since the running time and space consumption of the pseudoinversion is still tolerable and can be amortized over the numerous focus nodes.
  When the graphs become larger, our new heuristics pay off for $k$-LRIP as well -- except \textsc{SpecStoch},
  which is dominated in terms of quality \emph{and} running time.
}

Our future plans include the extension of the problem to edge deletions.
This problem is related to the protection of infrastructure and also important
in corresponding applications.

\paragraph*{Acknowledgments}
We are grateful for coding support in early development stages by HU Berlin student Matthias Görg.
\extend{Under the supervision of MP and HM, he also developed important ideas for the $\diag{\Lpinv}$ update strategy.}
\extend{Moreover, we thank Massimo Achterberg from Delft University of Technology for helpful discussions on several aspects of the paper.}

\bibliographystyle{abbrv}
\bibliography{references}

\end{document}